\renewcommand{\>}{\rangle}
\newcommand{\<}{\langle}
\newcommand{\norm}[1]{\|#1\|}
\DeclareMathOperator{\poly}{\operatorname{poly}}
\DeclareMathOperator{\tr}{\operatorname{tr}}
\newcommand{\F}[1]{\mathbb{F}_{#1}}
\newcommand{\PF}[1]{\mathbb{PF}_{#1}}
\newcommand{\level}[2]{L_{#1,#2}}
\newcommand{\sphere}[1]{\mathcal{S}_{#1}}
\newcommand{\dist}[1]{\Delta(#1)}
\newcommand{\quadchar}{\chi}
\newcommand{\kloossym}[1]{K_{#1}}
\newcommand{\kloos}[3]{\kloossym{#1}(#2,#3)}
\newcommand{\be}{\begin{equation}}
\newcommand{\ee}{\end{equation}}
\def\ba#1\ea{\begin{align}#1\end{align}}
\newcommand{\nn}{\nonumber\\}
\newcommand{\Span}{{\mathrm{affspan}}}
\newtheorem{theorem}{Theorem}
\newtheorem{corollary}[theorem]{Corollary}
\newtheorem{lemma}[theorem]{Lemma}
\newcommand{\eq}[1]{(\ref{eq:#1})}
\renewcommand{\sec}[1]{Section~\ref{sec:#1}}
\newcommand{\app}[1]{Appendix~\ref{app:#1}}
\newcommand{\thm}[1]{Theorem~\ref{thm:#1}}
\newcommand{\cor}[1]{Corollary~\ref{cor:#1}}
\newcommand{\lem}[1]{Lemma~\ref{lem:#1}}
\begin{document}

\title{Quantum algorithms for hidden nonlinear structures}
\author{Andrew M.\ Childs \\ \small amchilds@caltech.edu
\and Leonard J.\ Schulman \\ \small schulman@caltech.edu
\and Umesh V.\ Vazirani \\ \small vazirani@cs.berkeley.edu}
\date{}

\maketitle


\begin{abstract}
Attempts to find new quantum algorithms that outperform classical
computation have focused primarily on the nonabelian hidden subgroup
problem, which generalizes the central problem solved by Shor's
factoring algorithm.  We suggest an alternative generalization, namely
to problems of finding hidden nonlinear structures over finite fields.
We give examples of two such problems that can be solved efficiently
by a quantum computer, but not by a classical computer.  We also give
some positive results on the quantum query complexity of finding
hidden nonlinear structures. 
\end{abstract}

\section{Introduction}
\label{sec:intro}

One of the major open problems in quantum computation is to develop new quantum algorithms. Much of the work on this question has focused on the nonabelian hidden subgroup problem (HSP), attempting to extend the quantum solution of the abelian HSP \cites{Sim94,Sho97,Kit97}. Unfortunately, these efforts have met with only limited success. In this paper, we describe an alternative way of generalizing the success of Shor's algorithm.

The key to exponential savings in quantum algorithms is the creation of sharp constructive interference in large sets.  Such precise interference is only known to arise in a few cases, primarily in which the set is a group.  Under these conditions, the key to quantum speed-up is to diagonalize the group algebra, i.e., to perform a Fourier transform.  Once this has been done, certain structures become easy to detect.

The structures that have been investigated so far are subgroups and their cosets. In the case of abelian groups, the Fourier transform is a mapping from the group to its dual, and this mapping respects subgroups and cosets. Advances in quantum algorithms have been pursued by extending the groups from abelian to nonabelian, but in the nonabelian case there is no dual group, and the same approach is not available. Indeed, certain methods that work in the abelian case are known to fail in some nonabelian cases, such as the symmetric group~\cites{GSVV01,MRS05,HMRRS06}.

Our approach in this paper is to shift the focus back to the Fourier transform over abelian groups, and to consider what other hidden structures can be revealed by abelian Fourier transforms via constructive interference effects. We turn for inspiration to optics and acoustics, where light or sound can be highly focused (i.e., undergo highly constructive interference) when reflected by a conic (e.g., parabolic or elliptic) surface.
To connect this idea to known quantum algorithms, observe that abelian hidden subgroup problems, when restricted to a vector space, can be viewed as determining a hidden linear structure.
(Most generally, this viewpoint makes sense for a module over any ring, but here we restrict ourselves to vector spaces over finite fields.)
Any subgroup of the additive group of $\F{q}^d$ ($q=p^m$ a prime power) is an $\F{q}$-linear subspace, and the cosets of this subgroup consist of parallel affine subspaces, or \emph{flats}.  Given a black box function that is constant on each flat and distinct on different ones, abelian Fourier sampling determines the hidden subspace in time $\poly(d \log q)$.
Pursuing the analogy with wave mechanics, our approach is to set up black box functions that are constant on quadratic surfaces, and use interference effects to discover properties of the unknown quadratic.  More generally, we will study this approach for algebraic sets of higher degree.

The first problem we study is the \textit{hidden radius problem}.  In this problem, the hidden property is the radius $r$ of a sphere.  We give an efficient quantum algorithm for determining one bit of $r$, namely whether or not it is a quadratic residue, assuming that the dimension is odd.  With a classical computation, even this restricted problem requires exponentially many queries.  (For the problem of determining the other bits of $r$, we argue that the quantum query complexity is small.)

The second problem we discuss is the \textit{hidden flat of centers problem}. In this problem, the radius of the sphere is fixed (say, at $r=1$), but its center is constrained to lie in an unknown flat in $\F{q}^d$.  For example, the centers of the spheres may lie on an unknown line.  For this problem, we give an efficient quantum algorithm to determine the entire hidden flat, not just one bit of information about it. However, this algorithm also works only when the dimension is odd. The main idea of the algorithm is to use a quantum walk to move amplitude from the spheres to their centers. Our algorithms for both this and the hidden radius problem make crucial use of a connection to certain exponential sums called \emph{twisted Kloosterman sums}.

Both of the above problems fall into a framework of \textit{shifted subset problems}. For problems in this class, the main idea is to define a black box function that is constant on some subset of the points in $\F{q}^d$, as well as on shifted versions of this subset, with the function taking distinct values when the shifts are different. The goal may be either to determine some property of the basic subset, or of the allowed shifts, or both. Typically, this will not give a well-defined black box, since different shifts of the subset may lead to overlapping points.  However, we can resolve this issue by defining the black box carefully.

We also obtain results regarding hidden polynomial structures of higher degree.  These results are purely information-theoretic (i.e., regard query complexity).
We introduce the framework of \textit{hidden polynomial problems}. In these problems, the hidden object is a multivariate polynomial $h(x) \in \F{q}[x_1,\ldots,x_d]$ chosen from some set of possible polynomials. We are given a black box function that is constant on the level sets of $h(x)$ (i.e., the sets $\{x \in \F{q}^d: h(x)=y\}$ for various $y \in \F{q}$) and distinct on different level sets, and the goal is to determine $h(x)$.
When $h(x)$ is linear, this is the abelian HSP described above, whereas for more general polynomials, it is typically not an HSP in any group.  (Observe that a hidden polynomial problem, unlike a shifted subset problem, is automatically an oracle problem.)

Assuming the dimension $d$ and the degree of $h(x)$ is constant, we show that the query complexity of the hidden polynomial problem is typically $\poly(\log q)$. We show this by considering an analog of the standard approach to the HSP, wherein one query of the black box is used to produce a quantum state that depends on the hidden object. Provided these states are sufficiently statistically distinguishable, it follows that $\poly(\log q)$ copies contain enough information to determine the hidden object with high probability.
To establish distinguishability of the states, we give two simple but apparently new results about the fidelity between general quantum states satisfying certain intersection conditions, and we show that one of these conditions is satisfied by typical polynomials.  These lemmas could also have applications to problems involving quantum states derived from combinatorial designs that are unrelated to polynomials.

\section{Hidden radius problem}
\label{sec:hrp}

We begin by considering the first of two shifted subset problems, the \emph{hidden radius problem}.
In the quantum version of the hidden radius problem, our goal is to determine an unknown radius $r \in \F{q}$ given a uniform superposition over points in $\F{q}^d$ on a sphere of radius $r$ whose center is chosen uniformly at random.  We give an efficient quantum algorithm for determining whether $r$ is a quadratic residue, provided $d$ is odd.  We also show that the quantum query complexity of finding $r$ is $\poly(\log q)$, again assuming $d$ is odd, and we give evidence that this should also be the case for $d$ even.

For this problem to make sense classically as well as quantumly, we define it in terms of a black box function.
Roughly speaking, we would like to define a black box function that on input $x$, a point on the sphere of radius $r$ with center $t$, outputs some encryption of $t$, thereby giving a function that is constant on shifted spheres and distinct on different spheres.  But this cannot be done directly, since spheres can intersect.
To resolve this issue, we note that the vector $s=x-t$ pointing to $x$ from the center $t$ uniquely describes a particular sphere.  So our black box $f_1$ takes as input the pair $x$ and an encryption $\sigma$ of $s$ and outputs an encryption of $t$.  We also supply a black box $f_{-1}$ that takes as input the pair $x$ and encryption of $t$ and outputs the encryption of $s$.  The goal of the problem is to determine $r$ using an oracle that computes either $f_1$ or $f_{-1}$ as desired.
(In \app{ssp}, we give a black-box formulation of general shifted subset problems, which provides an alternative oracle for the hidden radius problem.)

It is straightforward to show that this problem is hard for a classical computer.

\begin{theorem}\label{thm:classically_hard}
Any classical computation with access to $f_1$ and $f_{-1}$ requires an expected exponential number of queries to obtain a $1/\poly(d \log q)$ bias for any single bit of information about $r$.
\end{theorem}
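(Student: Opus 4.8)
The plan is to fix a natural distribution over instances and argue that, unless the algorithm makes $q^{\Omega(1)}$ queries, the distribution of its transcript is essentially independent of $r$; this gives the exponential lower bound, with the statement about \emph{expected} queries following from the worst-case version by a routine truncation argument. Concretely, take $r$ uniform in $\F{q}$ (or uniform among the quadratic residues versus the non-residues, to make ``one bit'' literal), and let the encodings $\sigma$ (of sphere points $s$ with $Q(s)=r$) and $\tau$ (of centers) be independent uniformly random injections into their codomains. The target is: for any $N$-query algorithm and any two radii $r,r'$, the transcript distributions differ by $O(N/q)$ in total variation, whence the output bit correlates with any function of $r$ with advantage $O(N/q)$.

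The first step is to locate the only places where $r$ can enter the conversation. A call $f_1(x,\sigma_0)$ does something nontrivial only when $\sigma_0$ is a genuine encoding $\sigma(s)$ with $Q(s)=r$, in which case it returns an encoding of $t=x-s$ that---being a value of the random injection $\tau$---is a fresh uniform string unless it collides with a $\tau$-encoding already seen. A call $f_{-1}(x,\tau_0)$ does something nontrivial only when $\tau_0=\tau(t)$ for some center $t$ \emph{and} $Q(x-t)=r$. Any sample-type input the model may supply (a uniform $x$ together with a $\sigma$-encoding of a random sphere point) exhibits a uniform $x$ and a fresh uniform string, with no dependence on $r$; and, outside an event bounded below, the equality pattern the algorithm reads off from repeated encodings is a function of its own choices alone. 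Hence the only channels carrying information about $r$ are: (i) whether a never-before-seen encoding the algorithm submits to $f_1$ or $f_{-1}$ is in fact valid, and (ii) whether a call $f_{-1}(x,\tau(t))$ on an \emph{already known} valid $\tau$-encoding succeeds.

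Both channels are weak. For (i), the valid $\sigma$-encodings form a uniformly random subset of the encoding space, of size $|\{s:Q(s)=r\}|\le q^{d-1}+O(q^{(d-1)/2})$, so an encoding never handed to the algorithm is valid with probability $O(1/q)$; a union bound caps at $O(N/q)$ the chance that any of the $N$ queries submits a fresh valid encoding, and the very weak residual dependence of this probability on $r$ feeds through to the transcript only a further $r$-dependence of size $O(N/q)$. For (ii), once the algorithm holds a $\tau(t)$ returned by an earlier successful $f_1(x',\sigma(s))$---so that $t=x'-s$---the conditional law of $t$ given the collision-free transcript is essentially uniform on the translate $x'-\{s:Q(s)=r\}$, so for any $x$ the chance that $Q(x-t)=r$ equals $|\{s:Q(s)=r\}\cap\bigl((x'-x)+\{s:Q(s)=r\}\bigr)|/|\{s:Q(s)=r\}|$, which---by the standard bound on the intersection of a nondegenerate quadric with a translate of itself, valid for $d\ge 2$---is $O(1/q)$ unless $x=x'$, i.e.\ unless the call merely re-asks something already known. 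Summing, the probability that any nontrivial success occurs within $N$ queries is $O(N/q)$, after also absorbing an $O(N^2q^{-(d-1)})$ chance of a spurious encoding collision.

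Finally, the complement of this ``bad'' event is a property of the transcript itself---a fresh valid guess, a nontrivial $f_{-1}$ success, and a repeated encoding string are all plainly visible in the transcript---and, conditioned on it, the transcript is a deterministic function of the algorithm's adaptive choices, the uniform coordinates of any sample inputs, and fresh uniform encoding strings, none of which depend on $r$. So the transcript distributions for $r$ and $r'$ agree on every good outcome; their total variation distance is $O(N/q)$; and the theorem follows. The step I expect to demand the most care is the conditional-distribution claim behind (ii): one must rule out that adaptivity helps, in particular that the reweighting of the hidden centers caused by the \emph{absence} of collisions among the $\poly$-many values touched so far can concentrate some center enough to make $Q(x-t)=r$ likely. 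This is the familiar principle that conditioning a uniformly random injection on a $\poly$-sized partial view leaves each unqueried value nearly uniform, but it has to be argued rather than asserted; the quadric-intersection estimate and the low-dimensional cases are routine.
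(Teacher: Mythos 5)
Your plan follows the same basic strategy as the paper's sketch: make the encodings uniformly random injections, argue that the transcript is (nearly) independent of $r$ unless certain rare informative events occur, and union-bound those events; the conditional-uniformity step you flag at the end is exactly the step the paper also asserts without detail, and your truncation to handle expected query count is routine. The substantive issue is quantitative, and it matters for the statement being proved. You put every ``reuse'' success --- a query $f_{-1}(x',\tau(t))$ on an already-held center encoding with a fresh point $x'$ --- into the bad event. Since each such query succeeds with probability $\Theta(1/q)$, this caps your total-variation bound at $O(N/q)$, i.e.\ a lower bound of order $q/\poly(d\log q)$ queries. The paper's accounting charges only genuine \emph{sphere collisions}, i.e.\ two independently generated spheres coinciding, a birthday-type event of probability $O(N^2)$ over roughly $q^{d}$ (or at worst $q^{d-1}$ for coincidences of sphere points), which is what makes the claimed bound exponential in $d\log q$ rather than merely in $\log q$. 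Because the theorem quantifies the bias as $1/\poly(d\log q)$, the intended regime includes growing $d$, where $q$ may be only polynomial in $d$; there your $O(N/q)$ bound is vacuous. As written, your plan establishes the theorem only in the fixed-$d$, large-$q$ regime.

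The missing idea is that the reuse channel should be \emph{coupled} across $r$ and $r'$, not union-bounded. The success probability of such a query is $|\sphere{r}\cap(v+\sphere{r})|/|\sphere{r}|$ for the known displacement $v=x-x'$, i.e.\ a hyperplane section of a nondegenerate quadric divided by the sphere size, and one must show this is $q^{-1}(1+o(1))$ \emph{uniformly in $r$} (treating degenerate/tangent $v$, e.g.\ $\dist{v}=4r$, and isotropic cases separately --- for small $d$ these sections deviate from $q^{d-2}$ by a non-negligible relative amount, so this is where the real work lies); moreover, conditioned on success the algorithm receives only a fresh uniform encoding string, which carries no further $r$-dependence. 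With that, reuse queries can be simulated essentially identically for $r$ and $r'$, and the only events whose occurrence genuinely depends on $r$ are collisions among distinct spheres or among encodings, of probability $O(N^2/q^{d-1})$ at worst, recovering the paper's ``$N^2$ versus a polynomial fraction of $q^d$'' conclusion. Two smaller points: your channel-(i) analysis silently assumes the ambient encoding space has size at least about $q^{d}$, which is a reasonable reading of the loosely specified oracle but should be stated as part of the model; and on the good event the transcript distributions for $r$ and $r'$ do not literally agree (the per-query failure probabilities differ by $O(1/q)$), so the final step should be phrased as a hybrid/coupling bound rather than exact agreement --- this does not change the $O(N/q)$ conclusion.
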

\begin{proof}[Proof sketch]
Let the hidden radius $r$ be uniformly random, and let $f_1(x,\sigma)$ be a uniformly random one-to-one function of the sphere center $t=x-s$, where $\sigma$ is the encryption of $s$.  Now we can assume without loss of generality that the algorithm is deterministic. Given any sequence of evaluations of $f_1$ and $f_{-1}$ that do not involve any sphere twice, the conditional distribution on subsequent evaluations involving points on other spheres is uniform.  The probability of success is therefore sub-polynomial so long as the square of the number of queries is less than a polynomial fraction of $q^d$.
\end{proof}

To solve this problem on a quantum computer, we use the following state generation procedure.
Begin with a uniform superposition over $x$ and $\sigma$, then compute $f_1$, then uncompute $\sigma$ using $f_{-1}$, and finally discard the function value, giving (up to normalization)
\ba
\sum_{x,\sigma} |x, \sigma\>
&\mapsto \sum_{x,\sigma} |x,\sigma,f_1(x,\sigma)\> \\
&\mapsto \sum_{x,\sigma} |x,f_1(x,\sigma)\> \\
&\mapsto |\sphere{r}+t\> \text{~where $t$ is uniformly random in $\F{q}^d$}
\ea
where $\sphere{r}$ denotes the sphere of radius $r$ centered at the origin, and where we use the convention that for a finite set $S$, $|S\> := \sum_{s \in S} |s\>/\sqrt{|S|}$ denotes the normalized uniform superposition over elements of $S$.
In other words, we can use two queries to the oracle to produce the mixed quantum state
\be
\rho_r := \frac{1}{q^d} \sum_{t \in \F{q}} |\sphere{r}+t\>\<\sphere{r}+t|
\ee
from which we would like to extract information about the hidden radius $r$.

The sphere of radius $r$ centered at the origin is defined by $\sphere{r} := \level{\dist{x}}{r}$, where $\dist{x}:=\sum_{j=1}^d x_j^2$, and where  $\level{f}{y} := f^{-1}(y) = \{x \in X: f(x)=y\}$ denotes the level set of $f(x)$ with value $y$.  The quadratic polynomial $\dist{x}$ can be thought of as measuring the distance from the origin in $\F{q}^d$ (although of course it does not satisfy a triangle inequality); then $\sphere{r}$ consists of the points at distance $r$ from the origin.

Note that since we are working in a finite field, there is no concept of large spheres or small spheres; indeed all spheres contain approximately the same number of points.  In particular, the number of points on the sphere of radius $r$ is \cite{MMST96}*{Theorem 1}
\be
  |\sphere{r}| =
  \begin{cases}
    q^{d-1} + \quadchar((-1)^{(d-1)/2} r) \sqrt{q^{d-1}}
      & d \text{~odd}, r \ne 0 \\
    q^{d-1} - \quadchar((-1)^{d/2}) \sqrt{q^{d-2}}
      & d \text{~even}, r \ne 0 \\
    q^{d-1}
      & d \text{~odd}, r=0 \\
    q^{d-1} + \quadchar((-1)^{d/2})(q-1)\sqrt{q^{d-2}}
      & d \text{~even}, r=0 \,,
  \end{cases}
\label{eq:spheresizes}
\ee
where $\quadchar$ denotes the quadratic character of $\F{q}^\times$.
In other words, up to small corrections, every sphere has about $q^{d-1}$ points on it (except that the sphere of zero radius in two dimensions consists of $2q-1$ points when $q = 1 \bmod 4$; and is simply a single point, the origin, when $q=3 \bmod 4$).

Our goal is to determine $r$ using polynomially many copies of the hidden radius state $\rho_r$.
For any $r$, the state is invariant under arbitrary translations in $\F{q}^d$.  This symmetry can be exploited using the $d$-dimensional Fourier transform over $\F{q}$,
\be
  U := \frac{1}{\sqrt{q^d}} \sum_{x,k \in \F{q}^d} 
  \omega_p^{\tr k \cdot x} |k\>\<x|
\,,
\label{eq:fourier}
\ee
where $\omega_p := e^{2 \pi i/p}$, $k \cdot x := \sum_{j=1}^d k_j x_j$, and where $\tr a := a + a^p + \cdots + a^{q/p}$ denotes the trace from $\F{q}$ to $\F{p}$.  Fourier transforming the state, we find
\be
  U \rho_r U^\dag = \sum_{k \in \F{q}^d} \Pr(k|r) \, |k\>\<k|
  \qquad\text{with}\qquad
  \Pr(k|r)
  = \frac{1}{q^d |\sphere{r}|}
    \left|\sum_{x \in \sphere{r}} \omega_p^{\tr{k \cdot x}}\right|^2
\,.
\ee
Since the resulting density matrix is diagonal, we can measure in the Fourier basis without loss of information, and all that remains is to infer $r$ from samples of $\Pr(k|r)$.

To understand this distribution, we must understand the Fourier transform of a sphere, which is given by \cite{MMST96}
\be
  \sum_{x \in \sphere{r}} \omega_p^{\tr{k \cdot x}}
  = \frac{G_1^d}{q} \kloos{\quadchar^d}{r}{\dist{k}/4}
\label{eq:spherefourier}
\ee
(assuming $k \ne 0$), where $G_1 = -(-1)^m \sqrt{q}$ when $p=1 \bmod 4$, and $G_1 = -(-i)^m \sqrt{q}$ when $p=3 \bmod 4$, and where we define the \emph{$\eta$-twisted Kloosterman sum}
\be
  K_\eta(a,b)
  := \sum_{c \in \F{q}} \eta(c) \, \omega_p^{\tr(ac+bc^{-1})}
\ee
for $a,b \in \F{q}$, and for any multiplicative character $\eta$ of $\F{q}^\times$.
(This exponential sum can be viewed as the discrete analog of a Bessel function.)

If the dimension is odd, then we are interested in a $\quadchar$-twisted Kloosterman sum, also known as a Sali\'e sum.  This has the explicit form \cites{Sal32,Car53}
\be
  K_\quadchar(a,b) =
  \begin{cases}
    G_1 & ab=0, \text{~$a \ne 0$ or $b \ne 0$} \\
    2 \quadchar(b) G_1 \cos\frac{4 \pi \tr \sqrt{ab}}{p} & \quadchar(ab)=1 \\
    0 & \quadchar(ab)=-1 \text{~or~} a=b=0 \,.
  \end{cases}
\ee
In particular, we see that $\Pr(\dist{k}=0)$ is exponentially small, and for $\dist{k} \ne 0$, $\quadchar(\dist{k})$ determines $\quadchar(r)$ as follows. For $r \ne 0$, if $\quadchar(r\dist{k})=-1$, $\Pr(k|r)=0$.
On the other hand, if $r=0$, $\Pr(\quadchar(\dist{k})=+1)=\Pr(\quadchar(\dist{k})=-1)=1/2 - o(1)$.
This gives a simple quantum algorithm to determine $\quadchar(r)$.
\begin{theorem}
For $d$ odd, there is an efficient bounded-error quantum algorithm to determine $\quadchar(r)$.
\end{theorem}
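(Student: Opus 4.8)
The plan is to read the answer directly off a few Fourier samples. First I would use the two-query state-generation procedure to prepare $O(1)$ independent copies of $\rho_r$, apply the Fourier transform $U$ of \eq{fourier} to each, and measure in the computational basis; since $U\rho_r U^\dag$ is diagonal this is lossless, and each measurement returns an independent sample $k$ from $\Pr(k|r)$. For each outcome I would compute $\dist{k}=\sum_j k_j^2$ and its quadratic character $\quadchar(\dist{k})$ classically (e.g.\ from $\quadchar(a)=a^{(q-1)/2}$); on the event $\dist{k}=0$, which by \eq{spheresizes} and \eq{spherefourier} has probability $O(1/q)$ (it is dominated by the $k=0$ term, where $\Pr(0|r)=|\sphere{r}|/q^d\approx 1/q$), I would discard that copy and use a fresh one, incurring only $O(1)$ expected extra queries.

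The crux is the following dichotomy, obtained by inserting \eq{spherefourier} and the Sali\'e-sum evaluation displayed above into $\Pr(k|r)=\bigl|\sum_{x\in\sphere{r}}\omega_p^{\tr k\cdot x}\bigr|^2/(q^d|\sphere{r}|)$, using $|G_1|^2=q$ and $\quadchar(1/4)=1$ so that $\quadchar(r\cdot\dist{k}/4)=\quadchar(r\dist{k})$. If $r\neq 0$, then $\kloos{\quadchar}{r}{\dist{k}/4}=0$ whenever $\quadchar(r\dist{k})=-1$; hence every $k$ with $\dist{k}\neq 0$ in the support of $\Pr(\cdot|r)$ has $\quadchar(\dist{k})=\quadchar(r)$, and a sample conditioned on $\dist{k}\neq 0$ returns $\quadchar(r)$ with certainty. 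If instead $r=0$, then for $\dist{k}\neq0$ one has $|\kloos{\quadchar}{0}{\dist{k}/4}|^2=|G_1|^2=q$ independent of $k$, so $\Pr(\cdot|0)$ is uniform on $\{k:\dist{k}\neq0\}$; counting via \eq{spheresizes}, the two character classes have sizes $\tfrac{q-1}{2}\bigl(q^{d-1}\pm\quadchar((-1)^{(d-1)/2})\sqrt{q^{d-1}}\bigr)$, so $\Pr(\quadchar(\dist{k})=\pm1\mid r=0)=\tfrac12\pm O(q^{-(d-1)/2})$ --- a near-fair coin.

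The algorithm therefore draws $m$ samples (each conditioned on $\dist{k}\neq0$) and outputs: $\quadchar(r)=0$ (i.e.\ ``$r=0$'') if the observed characters are not all equal, and $\quadchar(r)=v$ if they are all equal to some $v\in\{\pm1\}$. When $r\neq0$ the samples are identically $\quadchar(r)$, so the output is always correct; when $r=0$ the output errs only if all $m$ near-fair coins agree, which happens with probability $2^{1-m}(1+o(1))$. Taking $m$ a constant gives bounded error (and $m=O(\log(1/\epsilon))$ gives error $\epsilon$), while state generation, $U$, and the computation of $\quadchar(\dist{k})$ all run in $\poly(d\log q)$ time, so the algorithm is efficient. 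The only work beyond the quoted Sali\'e-sum formula is this bookkeeping --- controlling the $\dist{k}=0$ event and the $O(q^{-(d-1)/2})$ bias, both handled by \eq{spheresizes} --- so I do not expect a genuine obstacle here; the real content, namely the closed form of the $\quadchar$-twisted Kloosterman sum that makes the odd-$d$ case so clean, is already in hand.
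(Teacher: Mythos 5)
Your proposal is correct and is essentially the paper's own algorithm: Fourier-sample $\rho_r$, discard outcomes with $\dist{k}=0$, output $r=0$ if both character values appear and otherwise the common value of $\quadchar(\dist{k})$; you have simply written out the ``straightforward calculation'' the paper omits, and your bookkeeping (support condition from the Sali\'e sum for $r\ne 0$, near-uniformity over the two character classes for $r=0$, and the $O(1/q)$ probability of $\dist{k}=0$) is sound. The only nitpick is that the $\dist{k}=0$ event is not dominated by $k=0$ alone --- the roughly $q^{d-1}$ nonzero $k$ on the zero-radius sphere contribute another $\Theta(1/q)$ --- but this does not affect the argument.
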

\begin{proof}
The algorithm repeats the following process a constant number of times: Prepare $\rho_r$, perform the Fourier transform,  measure a value of $k$, compute $\dist{k}$, and discard the result if $\dist{k}=0$.  If the results include points with both $\quadchar(\dist{k})=+1$ and $\quadchar(\dist{k})=-1$, output $r=0$.  Otherwise, output the common value of $\quadchar(\dist{k})$.  A straightforward calculation shows that this algorithm succeeds with constant probability.
\end{proof}

Ideally, we would like to determine not just $\quadchar(r)$, but rather $r$ itself.  While we do not know an efficient algorithm, we can at least show that polynomially many queries suffice:
\begin{theorem}\label{thm:odd}
For $d$ odd, $\poly(\log q)$ queries to the hidden radius oracle suffice to determine $r$.
\end{theorem}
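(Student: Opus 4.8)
The plan is to reduce to classical hypothesis testing. Since $U\rho_r U^\dag$ is diagonal in the Fourier basis, $n$ copies of $\rho_r$ --- costing $2n$ oracle queries --- furnish $n$ independent samples from the distribution $\Pr(\cdot|r)$ on $\F{q}^d$, and because the density matrices commute the fidelity equals the Bhattacharyya coefficient $F(\rho_r,\rho_{r'})=\sum_{k}\sqrt{\Pr(k|r)\,\Pr(k|r')}$. I will argue that $F(\rho_r,\rho_{r'})\le\tfrac{15}{16}+o(1)$ for all $r\ne r'$. Granting this, maximum-likelihood estimation from $n=O(\log q)$ samples recovers $r$ except with probability at most $\sum_{r'\ne r}F(\rho_r,\rho_{r'})^{\,n}\le q\,(\tfrac{15}{16}+o(1))^{n}$ (the standard Bhattacharyya bound on the event that some wrong hypothesis attains a larger likelihood), which is $<1/q$ once $n=O(\log q)$; this proves the theorem, in fact with only $O(\log q)$ queries.

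First I would make $\Pr(k|r)$ explicit from \eq{spherefourier}, inserting $|G_1|^2=q$ and the closed form of the Sali\'e sum. For $d$ odd and $k\ne 0$ this gives $\Pr(k|r)=1/(q|\sphere r|)$ when $\dist k=0$, $\Pr(k|r)=0$ when $\quadchar(r\,\dist k)=-1$, and $\Pr(k|r)=\tfrac{4}{q|\sphere r|}\cos^2\theta_r(\dist k)$ when $\quadchar(r\,\dist k)=1$, where $\theta_r(u):=\tfrac{4\pi}{p}\tr(\sqrt{ru}/2)$; also $\Pr(0|r)=|\sphere r|/q^d$ and $\Pr(k|0)=1/q^d$ for $\dist k\ne0$. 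The crucial feature is that $\Pr(k|r)$ depends on $k$ only through $u=\dist k$. Grouping the $k$'s into level sets of $\dist\cdot$ and using \eq{spheresizes} --- which gives $|\{k:\dist k=u\}|=|\sphere u|=|\sphere r|$ whenever $\quadchar(u)=\quadchar(r)$ and $u\ne0$ --- the Bhattacharyya sum collapses, for $r,r'$ nonzero with $\quadchar(r)=\quadchar(r')$, to
\[
F(\rho_r,\rho_{r'})=\frac4q\sum_{u:\,\quadchar(u)=\quadchar(r)}\bigl|\cos\theta_r(u)\cos\theta_{r'}(u)\bigr|+O(1/q),
\]
a sum over only $(q-1)/2$ terms, the $O(1/q)$ collecting the negligible mass at $k=0$ and at $\dist k=0$.

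The heart of the matter is bounding this sum away from $1$, for which I would combine two ingredients. The elementary inequality $2|xy|\le x^2+y^2-\tfrac14(x^2-y^2)^2$, valid for $|x|,|y|\le1$, reduces the task to lower-bounding $\sum_u(\cos^2\theta_r(u)-\cos^2\theta_{r'}(u))^2$. Writing $\cos^2\theta=\tfrac12(1+\cos2\theta)$ and $\cos2\theta_r(u)=\mathrm{Re}\,\psi(2\sqrt{ru})$ with $\psi(x):=\omega_p^{\tr x}$, substitute $v=ru$: as $u$ runs over $\{\quadchar(u)=\quadchar(r)\}$, $v$ runs over the nonzero squares and $r'u=wv$ with $w=r'/r$ a nonzero square, $w\ne1$; setting $b:=\sqrt w$ (so $b\ne\pm1$ precisely because $r\ne r'$) and using that $v\mapsto\sqrt v$ double-covers the nonzero squares with even summands, everything reduces to the single character sum $\sum_{c\in\F{q}^\times}\bigl(\psi(c)+\psi(-c)-\psi(bc)-\psi(-bc)\bigr)^2=4q$, which is immediate from $\sum_{c\in\F{q}^\times}\psi(\lambda c)=-1$ for $\lambda\ne0$ (every cross term involves one of the nonzero elements $1\pm b$ or $2(1\pm b)$). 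Hence $\sum_u(\cos^2\theta_r-\cos^2\theta_{r'})^2=q/8$, and the same computation with one character gives $\sum_u\cos^2\theta_r(u)=(q-2)/4$; combining yields $F(\rho_r,\rho_{r'})\le\tfrac{15}{16}+o(1)$. The remaining cases are easier: if $\quadchar(r)\ne\quadchar(r')$ the supports meet only on the $O(1/q)$-mass set $\{\dist k=0\}$, so $F=o(1)$ by Cauchy--Schwarz; and $F(\rho_0,\rho_{r'})\le 1/\sqrt2+o(1)$ by restricting Cauchy--Schwarz to the support of $\Pr(\cdot|r')$, which carries at most about half the mass of $\Pr(\cdot|0)$.

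I expect the main obstacle to be spotting the reduction of the previous paragraph. A priori the fidelity is a sum over $q^d$ points; the point is that, because the Sali\'e form makes $\Pr(k|r)$ see $k$ only through $\dist k$, it is really a one-parameter exponential sum, and the substitution $v=ru$ collapses the twisted-Sali\'e structure to a purely additive, completely degenerate Kloosterman-type sum that character orthogonality evaluates exactly. Once that is in place the remaining estimates are routine. I would also flag that this argument bounds only the query complexity: it rests on maximum-likelihood estimation (equivalently a pretty-good measurement), which we do not know how to carry out efficiently, so it gives no polynomial-time algorithm for $r$.
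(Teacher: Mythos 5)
Your proposal is correct and follows essentially the same route as the paper's proof in Appendix B: reduce to classical distinguishability of the Fourier-sampled distributions $\Pr(\cdot|r)$, dispose of the case $\quadchar(r)\ne\quadchar(r')$ via (near-)disjoint supports, rescale $u\mapsto ru$ to a one-parameter family, and bound an $L^1$-type separation from below by an $L^2$ character sum evaluated exactly by orthogonality. The only difference is cosmetic --- you bound the Bhattacharyya coefficient/fidelity away from $1$ and invoke the maximum-likelihood union bound, where the paper bounds the total variation distance away from $0$ --- and your character-sum evaluations ($q/8$ and $(q-2)/4$) check out.
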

The proof is given in \app{odd}.

If the dimension is even, then the distribution $\Pr(k|r)$ depends on the (non-twisted) Kloosterman sum
\be
  K_1(a,b)
  = \sum_{c \in \F{q}} \omega_p^{\tr(ac+bc^{-1})}
  = \sum_{c \in \F{q}} \quadchar(c^2 - 4ab) \, \omega_p^c
\,.
\ee
No closed-form expression for such sums is known.  But we do know that in the limit $q \to \infty$, the distribution of values of the Kloosterman sum asymptotically approaches the Sato-Tate (semicircle) distribution \cites{Kat88,Ado89}, and indeed the convergence to this distribution is rapid \cite{Nie91}.  Since the Sato-Tate distribution is far from uniform, this shows that the states $\rho_r,\rho_{r'}$ are information-theoretically distinguishable for typical pairs $r \ne r'$.  We conjecture that in fact arbitrary pairs can be distinguished.

Not only do we not have a closed-form expression for non-twisted Kloosterman sums, but we do not even know whether they can be efficiently approximated on a quantum computer.  If we could approximate these sums, then we could efficiently distinguish distinguishable pairs of radii.  The problem of approximately computing Kloosterman sums (as well as more general exponential sums) on a quantum computer appears to be a natural open problem.  Indeed, it will also be relevant to the even-dimension case of the problem considered in the following section.

\section{Hidden flat of centers problem}
\label{sec:hfcp}

In this section, we consider a second shifted subset problem, the \emph{hidden flat of centers problem}.  In this problem, unlike the hidden radius problem, the spheres are promised to have unit radius.  Their centers lie on an unknown flat $H$, and the goal is to determine this flat.  For a general black-box formulation of shifted subset problems that applies to the hidden flat of centers problem, see \app{ssp}.  With that black box, the classical query complexity of determining $H$ is exponential in $d \log q$.  Here we give an efficient quantum algorithm for finding $H$, provided $d=O(1)$ is odd.

Using the quantum oracle for the hidden flat of centers problem, we can produce the quantum state
\be
  \rho_H := \frac{1}{|H|} \sum_{h \in H} |\sphere{1}+h\>\<\sphere{1}+h|
\,.
\label{eq:hfcstate}
\ee
Our goal is to determine $H$ by making measurements on this state.  We do this by using a quantum walk to move amplitude from $\sphere{1}+h$ to $h$.  If we can move a sufficiently large fraction of the amplitude, then we can determine the hidden flat by (classically) solving a noisy linear algebra problem.

To move amplitude from unit spheres to their centers, we will use a continuous-time quantum walk on the \emph{Winnie Li graph}.  This graph has vertex set $\F{q}^d$, and edges between points $x,x' \in \F{q}^d$ with $\dist{x-x'}=1$.  Thus its adjacency matrix is
\be
  A := \sum_{x \in \F{q}^d} \sum_{s \in \sphere{1}} |x+s\>\<x|
\,.
\label{eq:wladj}
\ee
The continuous-time quantum walk for time $t$ is simply the unitary operator $e^{-iAt}$.  This unitary operator can be efficiently implemented on a quantum computer provided we can efficiently transform into the eigenbasis of $A$, and can efficiently compute the eigenvalue corresponding to a given eigenvector.

The adjacency matrix \eq{wladj} has eigenvectors
\be
  |\tilde k\> := \frac{1}{\sqrt{q^d}} \sum_{x \in \F{q}^d}
                 \omega_p^{\tr k \cdot x}|x\>
\ee
for $k \in \F{q}^d$, as is clear from translation invariance.  Thus we can transform to the eigenbasis of $A$ simply using the Fourier transform  \eq{fourier}.  The corresponding eigenvalues are given by the Fourier transform of a unit sphere (cf.\ \sec{hrp}):
\be
  \lambda_k =
  \sum_{x \in \sphere{1}} \omega_p^{\tr k \cdot x} =
  \begin{cases}
    |\sphere{1}| & k=0 \\
    G_1^d K(1,\dist{k}/4)/q & \text{otherwise}\,.
  \end{cases}
\ee
All of these eigenvalues are $O(\sqrt{q^{d-1}})$, with the exception of $\lambda_0 = \Theta(q^{d-1})$.  It will be helpful to remove the single large eigenvalue, so we will replace $A$ by $\bar A := A - \lambda_0 |\tilde 0\>\<\tilde 0|$.  Then we have $\norm{\bar A} \le 2 \sqrt{q^{d-1}}$ \cite{Wei48}.

\begin{lemma}
\label{lem:walk}
Suppose we start with the quantum state \eq{hfcstate}, perform the quantum walk with the modified adjacency matrix $\bar A$ for time $t=1/\sqrt{q^{d-1} \log q}$, and finally measure in the computational basis.  Then each point in $H$ occurs with probability $|H|^{-1}[1/\log q + O(1/\log^{3/2} q)]$, and any point not on $H$ occurs with probability $O(q^{-d})$.
\end{lemma}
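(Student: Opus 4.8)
The plan is to Fourier-transform the whole process, use that the walk time is short, and reduce everything to point-counts on spheres in $\F{q}^d$ and on affine slices of them. Since both the state \eq{hfcstate} and $e^{-i\bar A t}$ are translation invariant, after the Fourier transform $\bar A$ acts diagonally, as multiplication by $\bar\lambda_k$ (equal to $\lambda_k$ for $k\neq 0$, and to $0$ for $k=0$). Using that $|\sphere{1}+h\>$ has Fourier amplitude $\lambda_k\,\omega_p^{\tr k\cdot h}/\sqrt{Sq^d}$ on the $k$-th mode, where $S:=|\sphere{1}|$, the amplitude to observe a point $x'$ after the walk, when it was started from the sphere centered at $h$, is
\[
  \<x'|e^{-i\bar A t}|\sphere{1}+h\>
  =\frac{1}{q^d\sqrt S}\sum_{k\in\F{q}^d}e^{-i\bar\lambda_k t}\,\lambda_k\,\omega_p^{\tr k\cdot(h-x')}.
\]
The outcome probability at $x'$ is $|H|^{-1}\sum_{h\in H}$ of the squared modulus of this, so the whole statement reduces to estimating this exponential sum as $\delta:=h-x'$ ranges over the coset $H-x'$.

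The walk is short: since $\norm{\bar A}\le 2\sqrt{q^{d-1}}$ (equivalently $|\lambda_k|\le 2\sqrt{q^{d-1}}$ for $k\neq 0$), we have $|\bar\lambda_k|\,t\le 2/\sqrt{\log q}$ for all $k$, so I would Taylor-expand $e^{-i\bar\lambda_k t}=1-i\bar\lambda_k t-\tfrac12\bar\lambda_k^2 t^2+r_k$ with $|r_k|\le|\bar\lambda_k t|^3/6$, and collect powers of $t$. Let $N_m(\delta)$ denote the number of tuples $(s_1,\dots,s_m)\in\sphere{1}^{\,m}$ with $s_1+\cdots+s_m=-\delta$; Fourier duality gives $\sum_k\lambda_k^{\,m}\omega_p^{\tr k\cdot\delta}=q^d N_m(\delta)$. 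Hence the order-$t^0$ term of the amplitude is $N_1(\delta)/\sqrt S=[\,\dist{\delta}=1\,]/\sqrt S$, and for $j\ge1$ the order-$t^j$ term is a multiple of $\sum_{k\neq0}\lambda_k^{\,j+1}\omega_p^{\tr k\cdot\delta}=q^dN_{j+1}(\delta)-S^{j+1}$. I would evaluate the $N_m$ using the point-count formula \eq{spheresizes}, applied in $\F{q}^d$ and, after fixing an inner product, in the $(d-1)$-dimensional slices it produces (e.g.\ $N_2(\delta)=|\{s\in\sphere{1}:s\cdot\delta=-\dist{\delta}/2\}|$ is a sphere inside a hyperplane). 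This is where the hypotheses of the section (that $d$ is odd and $d=O(1)$) are used, just as in the hidden radius analysis: they control the shape of the counts and of the eigenvalues $\lambda_k$ (a Sali\'e sum), with bounded implied constants.

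Then I would read off the main term. For $x'=h$ (so $\delta=0$) the order-$t^0$ term vanishes because $\sum_k\lambda_k=q^dN_1(0)=q^d\,[\,0\in\sphere{1}\,]=0$ --- the origin is not on the unit sphere --- so the leading contribution is the order-$t^1$ term, which using $N_2(0)=|\sphere{1}|=S$ (so $\sum_{k\neq0}\lambda_k^2=q^dS-S^2$) equals
\[
  \frac{-it}{q^d\sqrt S}\,(q^dS-S^2)=-\,it\sqrt S\,(1-S/q^d).
\]
With $t=1/\sqrt{q^{d-1}\log q}$ and $S=q^{d-1}+O(\sqrt{q^{d-1}})$ this is $-i(\log q)^{-1/2}+O((\log q)^{-3/2})$, so the diagonal contribution to the outcome probability at $h\in H$ is $|H|^{-1}[(\log q)^{-1}+O((\log q)^{-3/2})]$, as claimed. (Equivalently: $|\sphere{1}+h\>=S^{-1/2}\bar A|h\>+(S/q^d)^{1/2}|\F{q}^d\>$; the walk fixes the uniform part, and in time $t$ it moves a $\Theta(1/\log q)$ fraction of the non-uniform part onto the center $h$.)

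The hard part will be the uniform error control: showing that the higher-order terms, the remainder, and the off-diagonal ($\delta\neq0$) and off-$H$ contributions add only $O((\log q)^{-3/2})$ to each diagonal probability and only $O(q^{-d})$ to each off-$H$ probability. For the remainder I would combine $|r_k|\le|\bar\lambda_k t|^3/6$ with Weil's bound and Parseval, $\sum_k\lambda_k^2=q^dN_2(0)=q^dS$, so that $\sum_{k\neq0}|\lambda_k|^4\le\norm{\bar A}^2\sum_{k\neq0}\lambda_k^2\le 4q^{d-1}\cdot q^dS=O(q^{3d-2})$ and the remainder, after the $1/(q^d\sqrt S)$ prefactor, is $O((\log q)^{-3/2})$; the $t^2$ term is smaller still, because the leading $q^{2d-3}$ of $N_3(0)$ --- which is $|\sphere{1}|$ times a $(d-2)$-dimensional sphere count --- cancels that of $S^3/q^d$, leaving $q^dN_3(0)-S^3=O(q^{(5d-5)/2})$ and a contribution $O((q\log q)^{-1})$. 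For $\delta\neq0$ I would bound the sums over shifts by Parseval together with the observation that $N_2(\delta)$ and $N_3(\delta)$ concentrate about $S^2/q^d$ and $S^3/q^d$, with the larger deviation size ($O(q^{(d-1)/2})$ for $N_2$) occurring only when a slice degenerates --- a codimension-one condition, hence an $O(1/q)$ fraction of shifts, again read off from \eq{spheresizes}. In particular the order-$t^0$ term $[\,\dist{\delta}=1\,]/\sqrt S$ is nonzero only on such a sparse set, contributing $O(q^{-d})$ after the $|H|^{-1}$ average, and the same bookkeeping handles the order-$t^1$, $t^2$ terms and the remainder. Making these "fraction of bad shifts" estimates clean --- and checking that they combine correctly with the $|H|^{-1}$ normalization over all of $\F{q}^d$, including degenerate positions of the direction space of $H$ relative to $\dist{\,\cdot\,}$ --- is the technical heart of the argument; the rest is bookkeeping with \eq{spheresizes}.
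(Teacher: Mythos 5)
Your first half (the probability at points of $H$) is correct and is essentially the paper's own argument: a short-time Taylor expansion in which the leading amplitude at the center is $-it\sqrt{|\sphere{1}|}\,(1-|\sphere{1}|/q^d)$; the paper simply bounds everything beyond first order by $O(\norm{\bar A}^2t^2)=O(1/\log q)$, which already yields the stated $1/\log q+O(\log^{-3/2}q)$, so your finer control of the $t^2$ and remainder terms is unnecessary (and your claim $q^dN_3(0)-S^3=O(q^{(5d-5)/2})$ is sharper than what Weil plus Parseval directly give, namely $O(q^{(5d-3)/2})$ --- though either suffices). The genuine gap is exactly where you flag ``the technical heart'': the $O(q^{-d})$ bound for points off $H$. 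Your plan rests on the claim that the shifts $\delta\in H-x'$ for which the slice counts degenerate form an $O(1/q)$ fraction of the coset because degeneracy is ``a codimension-one condition.'' But the quantities you need to control, $[\,\dist{\delta}=1\,]$ and $N_2(\delta),N_3(\delta)$, depend only on $\dist{\delta}$, and $\dist{\delta}$ can be \emph{constant} on the entire coset $H-x'$: for odd $d\ge 3$ the quadric $\dist{x}=r$ contains affine flats (isotropic directions exist), so there are flats $H$ and points $x'\notin H$ with $H-x'$ lying on a single sphere. On such cosets no averaging over shifts is available, your term-by-term bound can exceed $O(q^{-d})$ by a factor of order $q$, and if $H-x'\subset\sphere{1}$ even the $t^0$ term alone contributes $1/|\sphere{1}|=\Theta(q^{-(d-1)})$ to the probability at $x'$. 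So to complete your route you would need uniform-in-$x'$ estimates together with a separate treatment (or exclusion) of these degenerate configurations, none of which is supplied.

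It is worth knowing that the paper handles the background by a different, non-perturbative argument that avoids all exponential-sum and ``bad fraction'' estimates: since $A$ is translation invariant and, by Witt's Lemma, the orthogonal group acts transitively on nonzero points of fixed norm, the evolved state $e^{-i\bar At}|\sphere{1}+x\>$ is spherically symmetric about $x$, i.e.\ a combination of $|x\>$ and the states $|\sphere{r}+x\>$ with coefficients independent of $x$. Each $|\sphere{r}+x\>$ spreads its probability uniformly over $\Theta(q^{d-1})$ points, and after averaging over $x\in H$ only $|H\cap(\sphere{r}+y)|=O(q^{\dim H-1})$ centers place a given $y$ at distance $r$, giving $O(q^{-d})$ in one line. (Even this argument silently assumes the intersection bound $|H\cap(\sphere{r}+y)|=O(q^{\dim H-1})$, i.e.\ it glosses over the same flat-on-a-sphere configurations; but it needs nothing else, whereas your Fourier/Taylor route makes those configurations, and uniform control of $N_m(\delta)$ along arbitrary cosets, load-bearing.)
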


\begin{proof}
Consider the evolution of a single sphere $|\sphere{1}+x\>$.
Taylor expanding the action of the walk, the amplitude at the center $x$ is
\ba
  \<x|e^{-i \bar A t}|\sphere{1}+x\>
  &= -it \<x| \bar A |\sphere{1}+x\> + O(\norm{\bar A}^2 t^2) \\
  &= -it \sqrt{|\sphere{1}|} [1-O(q^{-1})] + O(\norm{\bar A}^2 t^2)
\,,
\ea
so
\be
  |\<x|e^{-i \bar A t}|\sphere{1}+x\>|^2 =
  \frac{1}{\log q} + O(\log^{-3/2} q)
\,.
\ee
Averaging over $x \in H$ gives the first part of the lemma.

It remains to show that the background is nearly uniform.  To see this, note that $e^{-i \bar A t}$ leaves invariant the subspace $\text{span}\{|x\>,|\sphere{0}+x\>,|\sphere{1}+x\>,\ldots,|\sphere{q-1}+x\>\}$ (which contains the state $|\tilde 0\>$), since $A |x\> = \sqrt{|\sphere{1}|} |\sphere{1}+x\>$ and
\be
  A |\sphere{r}+x\> 
  = \frac{1}{\sqrt{|\sphere{r}|}}
    \sum_{s \in \sphere{r}} \sum_{s' \in \sphere{1}}
    |s+s'+x\>
  = \frac{1}{\sqrt{|\sphere{r}|}} \sum_{y \in \F{q}^d}
    |\sphere{1} \cap \sphere{r} + y - x| \, |y\>
\,.
\ee
Here the coefficient of $y$ depends only on $\dist{y-x}$ (with $y=x$ a special case distinct from $\dist{y-x}=0$) by the fact that the orthogonal group over $\F{q}^d$ acts transitively on nonzero points of fixed norm (as a consequence of Witt's Lemma \cite{Asc00}).  Thus the evolved state $e^{-i \bar A t} |\sphere{1}+x\>$ is spherically symmetric about $x$.

Now consider making a measurement on $|\sphere{r}+x\>$ for some $r \in \F{q}$: each point on $\sphere{r}+x$ occurs with probability $1/|\sphere{r}|$.  Averaging over $x \in H$, we see that the point $y \in \F{q}^d$ occurs with probability $|\{x \in H: y \in \sphere{r}+x\}|/(|\sphere{r}| \cdot |H|)$.  Since $|\sphere{r}| = \Theta(q^{d-1})$, $|H|=q^{\dim H}$, and the numerator is $|H \cap (\sphere{1}+y)| = O(q^{\dim H-1})$, the probability of seeing any $y \in \F{q}^d$ is $O(q^{-d})$.  Thus the piece of $e^{-i \bar A t}|\sphere{1}+x\>$ orthogonal to $|x\>$, when averaged over $x \in H$, contributes probability $O(q^{-d})$ to every point $y \in \F{q}^d$.
\end{proof}

Now we show how to reconstruct the flat $H$ using samples from this
distribution.  A priori, $\dim H$ is unknown, so we iteratively try
increasing values of $\dim H$ until the following procedure identifies
$H$.

Let $d'=\dim H + 1$, so that $d'$ points in affine general position are sufficient to determine $H$.
Suppose we sample $k = \poly(\log q)$ points, so that with high probability the number of points in $H$ is at least $4 d'$. The following lemma shows that with high probability the $k$-sample does not intersect any flat $H'$ other than $H$, of the same dimension as $H$, in more than $4d'$ points.  Thus the flat $H$ can be computed by exhaustively trying all $\binom{k}{4d'} = \poly(\log q)$ subsets of the sample points.

\begin{lemma}
Suppose we sample $k$ points independently and identically with the following distribution: the point is uniformly random in $H$ with probability at least $1/\poly(\log q)$, and any point not in $H$ has probability at most $c/q^d$ for some constant $c$.
Then $\Pr[\exists H' \neq H, \dim H' = \dim H, \text{with $\geq 4 d'$ points
from the $k$-sample}] \leq O(\binom{k}{d'}^2) (c/q)^{d'}$.
\end{lemma}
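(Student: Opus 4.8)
The plan is a union bound over all ``bad'' flats $H'$, where a flat is bad if at least $4d'$ of the $k$ sampled points land on it. The key observation is that any flat $H'$ of the same dimension as $H$ is itself determined by $d'$ points in affine general position, so rather than quantifying over the (enormous) set of all flats directly, I would quantify over the $\binom{k}{d'}$ choices of which sampled points pin down $H'$, together with a further $\binom{k}{d'}$ (or rather $\binom{k-d'}{d'}$, bounded by $\binom{k}{d'}$) choices of $d'$ additional sampled points that must \emph{also} lie on $H'$. Concretely: if $H' \ne H$ contains $\ge 4d'$ sample points, pick the first $d'$ of them (in some fixed order) that are in general position; these determine $H'$. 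Then there are at least $3d' \ge d'$ further sample points on $H'$; pick $d'$ of them. This gives an injection from bad events to pairs $(S_1, S_2)$ of disjoint $d'$-subsets of the $k$-sample such that $\Span(S_1)$ is a flat of the right dimension and every point of $S_2$ lies on $\Span(S_1)$.

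First I would fix such a pair of index sets and bound the probability that the points indexed by $S_2$ all lie on the flat $\Span(S_1)$, \emph{conditioned} on the points indexed by $S_1$ (hence on the flat $\Span(S_1)$ itself). Since $\Span(S_1)$ is not $H$ — this is where the hypothesis $H' \ne H$ enters, and I should handle separately the case where $\Span(S_1) = H$, which contributes nothing since we want $H' \neq H$ — the flat $\Span(S_1)$ meets $H$ in a proper subflat, so $|H \cap \Span(S_1)|/|H| \le 1/q$. A single sample point therefore lands on $\Span(S_1)$ with probability at most $(1/q) \cdot 1 + (\text{number of points of }\Span(S_1)) \cdot (c/q^d)$; since $\Span(S_1)$ has at most $q^{\dim H} \le q^{d-1}$ points, the off-$H$ contribution is $O(c/q)$, so each of the $d'$ points of $S_2$ independently lands on $\Span(S_1)$ with probability $O(c/q)$. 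Multiplying, the conditional probability is $(O(c/q))^{d'}$, and since this bound is independent of the conditioning on $S_1$, it also bounds the unconditional probability.

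Then I would multiply by the number of pairs $(S_1,S_2)$, which is at most $\binom{k}{d'}^2$, to get the claimed $O\!\left(\binom{k}{d'}^2\right)(c/q)^{d'}$. The one point requiring a little care — and the main obstacle, though it is minor — is the reduction showing that a bad flat really does yield such a pair: I need that among $4d'$ points on a $(d'-1)$-dimensional flat one can always extract $d'$ in affine general position. This is true because the affine span of any subset of the $4d'$ points is contained in $H'$, and if no $d'$-subset were in general position the span of all of them would have affine dimension $< d'-1$, contradicting that these are points on (and generically spanning, after we note we only need \emph{some} such flat through them) $H'$; more carefully, I should simply define $H'$ to be the affine span of the sample points on it and argue that if this span has dimension $< \dim H$ it is contained in many flats of dimension $\dim H$, but then a dimension count shows the contribution is still dominated by the same bound (one can absorb lower-dimensional spans by noting they are intersections of higher-dimensional flats and the probability estimate only improves). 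I would streamline this in the write-up by just taking $H'$ to range over all flats of dimension exactly $\dim H$ and using that $d'$ general-position points determine such a flat uniquely, discarding degenerate sample configurations into the $O(\cdot)$.
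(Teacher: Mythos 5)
Your counting skeleton---a union bound over the at most $\binom{k}{d'}^2$ pairs of disjoint $d'$-subsets $(S_1,S_2)$ of the sample, conditioning on the points of $S_1$, and bounding the chance that a fresh point lands on $\Span(S_1)\neq H$ by $1/q$ (the on-$H$ mass, since $\Span(S_1)\cap H$ is a proper subflat of $H$) plus $q^{\dim H}\cdot c/q^d\le c/q$ (the off-$H$ mass)---is sound and is essentially the engine of the paper's argument. The genuine gap is in the reduction from the bad event to such a pair. You require $S_1$ to be $d'$ sample points in general position spanning $H'$, and you yourself note this can fail; but your fallback of ``discarding degenerate sample configurations into the $O(\cdot)$'' does not work, because the degenerate case is not exceptional---it is one of the two main cases. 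Whenever many of the $\ge 4d'$ points on $H'$ are sample points lying in $H$ itself, they are confined to $H'\cap H$, a flat of dimension at most $d'-2$, so no $d'$-subset of them spans $H'$; and since each point of $H$ carries far more probability mass than any point off $H$, this configuration is potentially the dominant contribution to the bad event, not a negligible one. Your first attempted justification (that failure of general position would contradict the points lying on $H'$) is simply false: points on $H'$ need not span $H'$.

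The repair is the idea you mention in passing and then abandon: drop general position (and unique determination of $H'$) entirely. Let $F$ be the affine span of the sample points on $H'$; then $F\subseteq H'$ forces $F\neq H$ (otherwise $H\subseteq H'$ and hence $H=H'$) and $\dim F\le d'-1$. Take $S_1$ to be $d'$ of those points containing a spanning subset of $F$ (a greedy spanning set padded with further points on $H'$) and $S_2$ to be $d'$ more of them; the event to bound is simply ``every point of $S_2$ lies in $\Span(S_1)$ and $\Span(S_1)\neq H$,'' and your per-point bound $O(c/q)$ only improves when $\dim\Span(S_1)<d'-1$. With this modification your argument is correct and in substance coincides with the paper's proof, which reaches the same $\binom{k}{d'}^2$-type bound after first splitting, by pigeonhole, into the case of $\ge 2d'$ points in $H'\cap H$ (contributing $\binom{k}{d'}^2(1/q)^{d'}$) and the case of $\ge 2d'$ points in $H'\setminus H$ (contributing $\binom{k}{d'}^2(c/q)^{d'}$).
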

\begin{proof}[Proof sketch]
For this event to occur, either $2d'$ points must fall in $H' \cap H$
or $2d'$ points must fall in $H'-H$. We bound the probabilities of each of
these events by similar arguments.

Consider the first of these events. Let $s_1, \ldots, s_{2d'}$ be the first $2d'$ points of the $k$-sample that fall in $H' \cap H$. 
Since
$
  \Pr[\dim \Span \{ s_1, \ldots, s_{2d'}\} \leq d'-2]
  \leq (1+O(1/q)) \Pr[\dim \Span \{ s_1, \ldots, s_{2d'}\}
  \leq d'-2
  \text{~and}
  \dim \Span \{ s_1, \ldots, s_{d'}\} = \dim \Span \{ s_1, \ldots, s_{2d'}\}]
$
(where $\Span$ denotes the affine span of a set of points),
it is sufficient to bound the probability of the latter
event. For each of the $\binom{k}{d'}$ subsets 
$s_1, \ldots, s_{d'}$ within the $k$-sample, the probability of this
event is bounded by the number of ways of choosing the remaining $d'$
points out of $k$, times the probability that all
remaining $d'$ points fall in $\Span \{ s_1, \ldots, s_{d'}\}$. 
Overall the probability of this event is bounded by $\binom{k}{d'}^2
(1/q)^{d'}$. 

The case of $H'-H$ is similar; the only change is that because we have
less control over the probabilities with which points are selected,
the final bound is $\binom{k}{d'}^2 (c/q)^{d'}$. 
\end{proof}

Overall, we have shown
\begin{theorem}
Suppose $d=O(1)$ is odd.  Then there is a quantum algorithm to determine the hidden flat of centers in time $\poly(\log q)$.
\end{theorem}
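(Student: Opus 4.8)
The plan is to chain together the three ingredients already assembled — the two-query state preparation of \sec{hrp}, the quantum walk of \lem{walk}, and the reconstruction lemma above — and to check that each costs only $\poly(\log q)$ when $d = O(1)$. As in \sec{hrp}, two queries (one to the oracle, one to its inverse) produce the state $\rho_H$ of \eq{hfcstate}. I would then run the continuous-time quantum walk $e^{-i\bar A t}$ with $t = 1/\sqrt{q^{d-1}\log q}$. This uses only $\poly(\log q)$ gates: the Fourier transform \eq{fourier} is efficient for $d = O(1)$ and diagonalizes $A$; the eigenvalue on $|\tilde k\rangle$ equals $\lambda_0 = |\sphere{1}|$ for $k=0$ and $G_1^d\,\kloos{\quadchar}{1}{\dist{k}/4}/q$ otherwise, which for $d$ odd is the Sali\'e sum written out in \sec{hrp}, each ingredient of which (the value $\dist{k}$, a square root in $\F{q}$ when $\quadchar(\dist{k})=1$, a trace to $\F{p}$, a cosine, and the Gauss-sum factor $G_1^d$) is efficiently computable; and the rank-one correction $-\lambda_0|\tilde 0\rangle\langle\tilde 0|$ is trivial to apply in the Fourier basis. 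By \lem{walk}, a subsequent computational-basis measurement produces a sample that is uniform on $H$ with probability $1/\log q + O(\log^{-3/2} q)$ and assigns probability $O(q^{-d})$ to every point off $H$.

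Repeating this prepare--walk--measure cycle $k = \poly(\log q)$ times, a Chernoff bound ensures that with high probability at least $4d'$ samples land in $H$, where $d' = \dim H + 1$, while the reconstruction lemma ensures that with high probability $H$ is the \emph{only} flat of dimension $\dim H$ meeting the $k$-sample in $\ge 4d'$ points. Hence $H$ is obtained by examining all $\binom{k}{4d'} = \poly(\log q)$ subsets of size $4d'$ of the sample and outputting the affine span (over $\F{q}$) of any subset whose span has dimension $\le \dim H$. Since $\dim H$ is not known in advance, this is wrapped in an outer loop over candidate dimensions $e = 0,1,\dots,d$; for candidate $e$ we put $d' = e+1$ and seek an $e$-flat through $\ge 4d'$ samples. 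An argument in the same spirit as the reconstruction lemma — fixing $e+1$ of the samples, spanning an $e$-flat, and bounding the chance that $3(e+1)$ further samples fall in it (each with probability $\le q^{-1}$ when that flat is a proper sub-flat of $H$, and $O(q^{-d})$ when it is not) — shows that no candidate $e < \dim H$ is accepted, while $e = \dim H$ is accepted and returns $H$. All loops have length $\poly(\log q)$ and there are $d+1 = O(1)$ of them, so the total cost is $\poly(\log q)$.

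The step that deserves the most care is the efficient implementability of the walk, and this is exactly where the hypothesis that $d$ is odd enters: oddness makes the relevant twisted Kloosterman sum a Sali\'e sum with a closed form, so the walk eigenvalues — and hence $e^{-i\bar A t}$ — can be computed, whereas for $d$ even no closed form is known and the implementability of the walk is an open question (cf.\ the remarks closing \sec{hrp}). A second point to verify carefully is the correctness of the dimension search: one must confirm that no lower-dimensional flat — in particular none of the exponentially many proper sub-flats of $H$ — accidentally collects $\ge 4d'$ samples, which follows from the union bound sketched above together with the near-uniformity of the off-$H$ background guaranteed by \lem{walk}.
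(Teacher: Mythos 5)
Your proposal is correct and follows essentially the same route as the paper: prepare $\rho_H$, run the walk $e^{-i\bar A t}$ for $t=1/\sqrt{q^{d-1}\log q}$ implemented via the Fourier transform and the closed-form Sali\'e-sum eigenvalues (which is exactly where $d$ odd is used), invoke \lem{walk} for the sampling distribution, and reconstruct $H$ from $\poly(\log q)$ samples by the reconstruction lemma with an exhaustive search over $\binom{k}{4d'}$ subsets inside an outer loop over candidate dimensions. The only cosmetic difference is that the paper formulates the oracle for this problem via the general shifted-subset black box of \app{ssp} rather than the two-function oracle of \sec{hrp}, which does not affect the argument.
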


Note that we could use the same algorithm for $d$ even, provided we could efficiently approximate the eigenvalues of $A$ by approximately calculating (non-twisted) Kloosterman sums.

\section{Hidden polynomial problems}
\label{sec:hpp}

In this section, we prove some general results on the distinguishability of black box functions, and then use these results to show that the quantum query complexity of the hidden polynomial problem (defined in \sec{hppspecial}) is typically polynomial.

\subsection{Distinguishability of states with given intersection properties}
\label{sec:hppgeneral}

Consider a black box function $f:X \to Y$ where $X,Y$ are finite sets. Any such function can be encoded in a quantum state using an approach analogous to the so-called standard method for the hidden subgroup problem. In this approach, we begin with a uniform superposition over the input space, compute the black box function in an auxiliary register, and then discard that register, giving (up to normalization)
\ba
\sum_{x \in X} |x\>
&\mapsto \sum_{x \in X} \sum_{x \in X} |x,f(x)\> \\
&\mapsto |\level{f}{y}\>
\text{~where $y \in Y$ occurs with probability~} |\level{f}{y}|/|X|
\,.
\ea
(Recall that $\level{f}{y} := f^{-1}(y) = \{x \in X: f(x)=y\}$ denotes the level set of $f(x)$ with value $y$.)
In other words, this procedure uses one query of the black box to produce the mixed quantum state
\be
\rho_f := \sum_{y \in Y} \frac{|\level{f}{y}|}{|X|}
|\level{f}{y}\>\<\level{f}{y}|
\,.
\label{eq:hdstate} 
\ee

Suppose that $f$ is chosen from a set $\mathcal{F}$ of possible black box functions (where $\log |\mathcal{F}|=\poly(\log |X|)$), and we would like to determine which one we have. Then we can create $t=\poly(\log|X|)$ copies of the state \eq{hdstate}, $\rho_f^{\otimes t}$, and perform a quantum measurement to attempt to determine $f$. If some such measurement succeeds with high probability, then the query complexity of the problem is polynomial. For some measurement to succeed, it suffices to show that the single-copy states are pairwise distinguishable, as measured by the quantum fidelity
\be
F(\rho,\rho') := \tr |\sqrt\rho \sqrt{\rho'}|
\,.
\ee
This follows from a result of Barnum and Knill \cite{BK02}:
\begin{theorem}\label{thm:bk}
Suppose $\rho$ is drawn from an ensemble $\{\rho_1,\ldots,\rho_N\}$, where each $\rho_k$ occurs with some fixed prior probability. Then there exists a quantum measurement that returns the outcome $k$ with probability at least $1-N\sqrt{\max_{i \ne j}F(\rho_i,\rho_j)}$.
\end{theorem}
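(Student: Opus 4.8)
The plan is to exhibit such a measurement explicitly: take it to be the \emph{pretty good measurement} (equivalently, the square-root measurement) associated with the ensemble, and bound its error by the pairwise fidelities; this is the route of Barnum and Knill \cite{BK02}. A naive union bound over the optimal pairwise (Helstrom) discriminators is not directly available, since binary measurements cannot be composed on a single copy of the state, and the square-root measurement is exactly what salvages that intuition.

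Write the priors as $p_1,\dots,p_N$, set $\sigma_k:=p_k\rho_k$, and let $\rho:=\sum_k\sigma_k$ be the averaged state; since each $p_k>0$, every $\rho_k$ is supported on $\operatorname{supp}\rho$. Define POVM elements
\[
  E_k:=\rho^{-1/2}\sigma_k\rho^{-1/2},
\]
where $\rho^{-1}$ denotes the pseudoinverse, so that $\sum_kE_k$ is the projector onto $\operatorname{supp}\rho$ and adjoining the complementary element (assigned to an arbitrary outcome) yields a legitimate measurement. Its probability of returning the correct label is $P=\sum_k\tr(E_k\sigma_k)$, and since $\sum_l\tr(E_l\rho_k)=1$ for every $k$ one gets the identity
\[
  1-P=\sum_{k\neq l}\tr(E_l\sigma_k).
\]

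The heart of the argument is to bound each off-diagonal term by a fidelity. Writing
\[
  \tr(E_l\sigma_k)=\tr\!\big(\rho^{-1/2}\sigma_l\rho^{-1/2}\sigma_k\big)=\big\|\sigma_k^{1/2}\rho^{-1/2}\sigma_l^{1/2}\big\|_2^2 ,
\]
the goal is the operator inequality
\[
  \big\|\sigma_k^{1/2}\rho^{-1/2}\sigma_l^{1/2}\big\|_2^2\ \le\ \tr\big|\sigma_k^{1/2}\sigma_l^{1/2}\big|\ =\ \sqrt{p_kp_l}\,F(\rho_k,\rho_l).
\]
Because $\sigma_k,\sigma_l\le\rho$, the factors $\sigma_k^{1/2}\rho^{-1/2}$ and $\rho^{-1/2}\sigma_l^{1/2}$ are contractions, which already yields the crude bound $\tr(E_l\sigma_k)\le\sqrt{p_kp_l}$; sharpening this to the fidelity bound is the step I expect to be the main obstacle. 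It is precisely the technical lemma of Barnum and Knill, and the cleanest proofs I know proceed by operator-monotonicity / operator-convexity arguments (for instance through an integral representation of the square root such as $\sqrt{X}=\tfrac1\pi\int_0^\infty X(X+s)^{-1}\,s^{-1/2}\,ds$), using the full relation $\sigma_k+\sigma_l\le\rho$ rather than just the individual dominations.

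Granting the per-term bound, the theorem follows by crude estimates. Since each $F(\rho_i,\rho_j)\le1$ we have $F(\rho_k,\rho_l)\le\sqrt{F(\rho_k,\rho_l)}\le\big(\max_{i\neq j}F(\rho_i,\rho_j)\big)^{1/2}$, and Cauchy--Schwarz gives $\sum_{k\neq l}\sqrt{p_kp_l}=\big(\sum_k\sqrt{p_k}\big)^2-1\le N-1$. Hence
\[
  1-P\ \le\ \Big(\max_{i\neq j}F(\rho_i,\rho_j)\Big)^{1/2}\sum_{k\neq l}\sqrt{p_kp_l}\ \le\ (N-1)\sqrt{\max_{i\neq j}F(\rho_i,\rho_j)}\ <\ N\sqrt{\max_{i\neq j}F(\rho_i,\rho_j)},
\]
so the pretty good measurement is the required one. (The same estimates in fact give the slightly stronger bound $1-P\le(N-1)\max_{i\neq j}F(\rho_i,\rho_j)$.)
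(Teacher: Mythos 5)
The paper does not prove this statement at all---it is imported verbatim as a result of Barnum and Knill \cite{BK02}---so the only comparison available is against the proof in that reference, whose structure (the pretty good / square-root measurement) you have correctly identified. Your surrounding bookkeeping is fine: the POVM $E_k=\rho^{-1/2}\sigma_k\rho^{-1/2}$ plus the complement of $\Pi_\rho$ is legitimate, the identity $1-P=\sum_{k\ne l}\tr(E_l\sigma_k)$ holds because each $\sigma_k$ is supported on $\operatorname{supp}\rho$, and the final estimates $F\le\sqrt{F}\le\sqrt{\max_{i\ne j}F}$ together with $\sum_{k\ne l}\sqrt{p_kp_l}\le N-1$ do yield a bound that implies (indeed slightly improves) the stated $1-N\sqrt{\max_{i\ne j}F(\rho_i,\rho_j)}$.

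However, as a proof your proposal has a genuine gap, and you name it yourself: the per-term inequality
\[
  \tr(E_l\sigma_k)=\bigl\|\sigma_k^{1/2}\rho^{-1/2}\sigma_l^{1/2}\bigr\|_2^2\ \le\ \tr\bigl|\sigma_k^{1/2}\sigma_l^{1/2}\bigr|=\sqrt{p_kp_l}\,F(\rho_k,\rho_l)
\]
is asserted, not proven. This inequality is the entire nontrivial content of the Barnum--Knill theorem; everything else in your argument is routine. The contraction argument you do supply (from $\sigma_k,\sigma_l\le\rho$) only gives $\tr(E_l\sigma_k)\le\sqrt{p_kp_l}$, and summing that over pairs produces a bound of order $N-1$, which is vacuous. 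Gesturing at operator monotonicity and the integral representation of the square root indicates the right family of techniques, but until that lemma is actually established (or explicitly cited as Barnum--Knill's, which is what the paper itself does for the whole theorem), the proposal is a correct reduction of the theorem to its hardest step rather than a proof. If your goal is a self-contained argument, that is the step you must supply; if citation is acceptable, note that the paper simply quotes the full theorem from \cite{BK02}, so reproving the easy reduction adds little.
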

\noindent
(In fact, by the minimax theorem, this result holds even without assuming a prior distribution for the ensemble \cite{HW06}.)
In particular, since $F(\rho^{\otimes \ell},\rho'^{\otimes \ell}) = F(\rho,\rho')^\ell$, arbitrarily small error probability $\epsilon>0$ can be achieved using $\ell \ge \lceil 2(\log N - \log \epsilon)/\log(1/\max_{i \ne j}F(\rho_i,\rho_j))\rceil$, so $\ell=\poly(\log N)$ copies suffice provided the maximum fidelity is bounded away from $1$ by at least $1/\poly(\log N)$.
Such an argument has been used to show that the query complexity of the hidden subgroup problem is polynomial \cite{EHK99}; here we give analogous results for the hidden polynomial problem.

We begin by giving two bounds on the pairwise fidelity in terms of the intersection properties of the functions.

\begin{lemma}\label{lem:comb1}
Suppose $\Pr_{y,y' \in Y}[|\level{f}{y} \cap \level{f'}{y'}| \ge \alpha] \le \beta$,
and $|\level{f}{y}| \le \delta$ for all $y \in Y$.
Then $F(\rho_f,\rho_{f'})^2 \le (\alpha^2 + \beta\delta^2)|Y|^3/|X|^2$.
\end{lemma}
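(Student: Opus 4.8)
The plan is to use the fact that the level sets $\level{f}{y}$ partition $X$, so the normalized superpositions $\{|\level{f}{y}\>\}_{y \in Y}$ form an orthonormal set and \eq{hdstate} already displays $\rho_f$ in diagonal form, with eigenvalues $|\level{f}{y}|/|X|$; likewise for $\rho_{f'}$. In particular $\rho_f$ and $\rho_{f'}$ each have rank at most $|Y|$, hence so does the product $M := \sqrt{\rho_f}\,\sqrt{\rho_{f'}}$, whose trace norm is $F(\rho_f,\rho_{f'})$.

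First I would pass from the trace norm to the Hilbert--Schmidt norm. For any matrix $M$ of rank $r$ with singular values $\sigma_1,\dots,\sigma_r$, Cauchy--Schwarz gives $\tr|M| = \sum_i \sigma_i \le \sqrt{r}\,\bigl(\sum_i \sigma_i^2\bigr)^{1/2} = \sqrt{r}\,\norm{M}_2$. With $r \le |Y|$ and $\norm{M}_2^2 = \tr(M^\dag M) = \tr(\sqrt{\rho_{f'}}\,\rho_f\,\sqrt{\rho_{f'}}) = \tr(\rho_f\rho_{f'})$, this yields
\[
  F(\rho_f,\rho_{f'})^2 \le |Y|\,\tr(\rho_f\rho_{f'})
\,.
\]

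Next I would evaluate the overlap using the orthonormal diagonalizations above, together with $|\<\level{f}{y}|\level{f'}{y'}\>|^2 = |\level{f}{y}\cap\level{f'}{y'}|^2/(|\level{f}{y}|\,|\level{f'}{y'}|)$, which gives
\[
  \tr(\rho_f\rho_{f'})
  = \sum_{y,y'\in Y} \frac{|\level{f}{y}|\,|\level{f'}{y'}|}{|X|^2}\,
    \frac{|\level{f}{y}\cap\level{f'}{y'}|^2}{|\level{f}{y}|\,|\level{f'}{y'}|}
  = \frac{1}{|X|^2}\sum_{y,y'\in Y} |\level{f}{y}\cap\level{f'}{y'}|^2
\,.
\]
Then I would split the sum according to whether $|\level{f}{y}\cap\level{f'}{y'}| < \alpha$ or $\ge \alpha$. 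At most $|Y|^2$ pairs are of the first kind, each contributing at most $\alpha^2$; by hypothesis at most $\beta|Y|^2$ pairs are of the second kind, and each such term is at most $|\level{f}{y}|^2 \le \delta^2$. Hence the sum is at most $(\alpha^2 + \beta\delta^2)|Y|^2$, and substituting back gives $F(\rho_f,\rho_{f'})^2 \le (\alpha^2 + \beta\delta^2)|Y|^3/|X|^2$, as claimed.

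I do not expect a serious obstacle here; the only point requiring any attention is noticing that $\sqrt{\rho_f}\,\sqrt{\rho_{f'}}$ has rank at most $|Y|$ rather than the full dimension of the ambient Hilbert space, since that is exactly what keeps the (otherwise lossy) Cauchy--Schwarz passage from the trace norm to the Hilbert--Schmidt norm from destroying the bound. Everything else is routine bookkeeping with intersection sizes.
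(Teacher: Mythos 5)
Your proof is correct and follows essentially the same route as the paper: the rank-$\le|Y|$ Cauchy--Schwarz bound $F(\rho_f,\rho_{f'})^2 \le |Y|\tr(\rho_f\rho_{f'})$, the exact evaluation $\tr(\rho_f\rho_{f'}) = \sum_{y,y'}|\level{f}{y}\cap\level{f'}{y'}|^2/|X|^2$, and the split of the sum by whether the intersection exceeds $\alpha$. The paper leaves that last splitting implicit (``the claim follows from the assumptions''); you have simply written it out.
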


\begin{proof}
By the Cauchy-Schwartz inequality applied to the singular values of $\sqrt{\rho_f}\sqrt{\rho_{f'}}$ (whose rank is clearly at most $|Y|$),
\ba
F(\rho_f,\rho_{f'})^2
&\le |Y| \tr \rho_f \rho_{f'} \\
&= \frac{|Y|}{|X|^2} \sum_{y,y' \in Y}
|\level{f}{y} \cap \level{f'}{y'}|^2 \,,
\ea
and the claim follows from the assumptions.
\end{proof}

\begin{lemma}\label{lem:comb2}
Suppose $\Pr_{y \in Y}[|\level{f}{y} \cap \level{f'}{y'}| \ge \alpha] \le \beta$ for all $y' \in Y$,
and $\gamma \le |\level{f}{y}| \le \delta$ for all $y$. Then $F(\rho_f,\rho_{f'})^2 \le \alpha |Y|^2/\gamma|X| + \beta \delta |Y|/|X|$.
\end{lemma}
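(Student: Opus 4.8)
The plan is to mimic the proof of \lem{comb1}, but to extract more from the Cauchy--Schwartz step by choosing where to ``pay'' the factor of $|Y|$ more carefully. Recall that $F(\rho_f,\rho_{f'})^2 \le (\operatorname{rank})\cdot \tr\rho_f\rho_{f'}$ for the operator $\sqrt{\rho_f}\sqrt{\rho_{f'}}$; the rank is at most $|Y|$ as before. So it suffices to bound
\[
  \tr \rho_f\rho_{f'} = \frac{1}{|X|^2}\sum_{y,y'\in Y} |\level{f}{y}\cap\level{f'}{y'}|^2 .
\]
The difference from \lem{comb1} is that here the hypothesis is one-sided: for each fixed $y'$, only a $\beta$-fraction of the $y$'s give a large intersection. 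So I would fix $y'$, split the inner sum over $y$ into the ``small-intersection'' part (where $|\level{f}{y}\cap\level{f'}{y'}| < \alpha$) and the ``large-intersection'' part (at most $\beta|Y|$ values of $y$).

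First I would bound the small part: each term is at most $\alpha \cdot |\level{f}{y}\cap\level{f'}{y'}|$, and summing over $y$ gives $\alpha \sum_y |\level{f}{y}\cap\level{f'}{y'}| = \alpha |\level{f'}{y'}| \le \alpha\delta$. Summing over the $|Y|$ values of $y'$ contributes $\alpha\delta|Y|$ to $\sum_{y,y'}|\cdots|^2$, hence $\alpha\delta|Y|/|X|^2$ to $\tr\rho_f\rho_{f'}$ and $\alpha\delta|Y|^2/|X|^2$ after the rank factor. Hmm --- I should check this matches the stated $\alpha|Y|^2/\gamma|X|$; it does if $\delta/|X|^2 \le 1/(\gamma|X|)$, i.e.\ $\gamma\delta \le |X|$, which may fail. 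The cleaner route: use $|\level{f}{y}\cap\level{f'}{y'}| \le \min(|\level{f}{y}|,|\level{f'}{y'}|)$ and the crude bound $\sum_{y,y'}|\level{f}{y}\cap\level{f'}{y'}| = \sum_{y'}|\level{f'}{y'}| = |X|$, so $\sum_{y,y'}|\level{f}{y}\cap\level{f'}{y'}|^2 \le \alpha \sum_{y,y'}|\cdots| = \alpha|X|$ on the small part only if every term in the small part is $< \alpha$ --- which it is. That gives $\alpha|X|/|X|^2 = \alpha/|X|$ for $\tr$, times $|Y|$ rank factor, times... wait, that drops the $\gamma$. Let me instead keep the version that produces the stated bound: bound the small-part contribution to $\tr\rho_f\rho_{f'}$ by $\alpha\cdot\frac{1}{|X|^2}\sum_{y,y'}|\level{f}{y}\cap\level{f'}{y'}|$, but then use $\sum_{y,y'}|\cdots| \ge $ nothing useful --- so instead I bound each $|\level{f}{y}\cap\level{f'}{y'}|\le |\level{f}{y}|$ and note $|\level{f}{y}|\ge\gamma$ forces $|Y|\le |X|/\gamma$, whence the small part is at most $\alpha\cdot|Y|\cdot|X|/|X|^2 = \alpha|Y|/|X|$, times the rank $|Y|$, giving $\alpha|Y|^2/|X|$. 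To recover the $1/\gamma$ I would instead route the large-intersection count through $|Y|\le|X|/\gamma$: that is the actual mechanism by which $\gamma$ enters.

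For the large part: there are at most $\beta|Y|$ bad values of $y$ for each $y'$, each contributing at most $\delta^2$, so the contribution to $\sum_{y,y'}|\cdots|^2$ is at most $\beta|Y|\cdot\delta^2\cdot|Y| = \beta\delta^2|Y|^2$; dividing by $|X|^2$ and multiplying by the rank factor $|Y|$ would give $\beta\delta^2|Y|^3/|X|^2$, which is worse than the stated $\beta\delta|Y|/|X|$. So for the large part I must \emph{not} square both factors: write $|\level{f}{y}\cap\level{f'}{y'}|^2 \le \delta\cdot|\level{f}{y}\cap\level{f'}{y'}|$ and sum, getting $\delta\sum_{y\text{ bad}}|\level{f}{y}\cap\level{f'}{y'}| \le \delta|\level{f'}{y'}| \le \delta^2$ --- still not matching. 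The correct move is to use that $\sum_{y}|\level{f}{y}\cap\level{f'}{y'}| = |\level{f'}{y'}|$ and $|\level{f}{y}\cap\level{f'}{y'}|\le\delta$ to get large-part contribution $\le \delta\cdot|\level{f'}{y'}|$; summing over $y'$ gives $\delta|X|$, so $\delta|X|/|X|^2 = \delta/|X|$ for $\tr$, times rank $|Y|$: that is $\delta|Y|/|X|$ --- but where is $\beta$? It seems $\beta$ must enter because without the assumption this would be vacuous. The resolution I expect: the large-intersection terms can be at most $\delta$ each and there are at most $\beta|Y|$ of them per $y'$, so their total mass $\sum_{y\text{ bad}}|\level{f}{y}\cap\level{f'}{y'}|$ is at most $\min(\beta|Y|\delta,\ |\level{f'}{y'}|)$; combining, the large part of $\sum_{y,y'}|\cdots|^2 \le \delta\sum_{y'}\min(\beta|Y|\delta,\delta)\le\beta\delta^2|Y|^2$ --- which loops back. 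The main obstacle, then, is finding the right way to interleave the rank-$|Y|$ factor, the one-sided density hypothesis, and the bound $|Y|\le|X|/\gamma$ so that both terms come out as claimed rather than with extra powers of $|Y|$ or $\delta$; concretely, I believe the small part should be bounded as $\le \alpha|Y|^2/(\gamma|X|)$ using $|Y|\le|X|/\gamma$ \emph{once} and the rank factor \emph{once} in a way that avoids double-counting, and the large part as $\le \beta\delta|Y|/|X|$ by summing $|\level{f}{y}\cap\level{f'}{y'}|\le\delta$ over the $\le\beta|Y|$ bad $y$ and over $|Y|$ values of $y'$, picking up the rank factor to cancel one $|Y|$. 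Once the bookkeeping is arranged, adding the two contributions and invoking $F(\rho_f,\rho_{f'})^2 \le |Y|\tr\rho_f\rho_{f'}$ finishes the proof; the rest is the routine Cauchy--Schwartz-on-singular-values argument already used in \lem{comb1}.
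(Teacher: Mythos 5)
Your proposal never actually closes, and the difficulty you ran into is intrinsic to the tool you chose: the lemma is not proved by the rank-times-trace Cauchy--Schwartz bound of \lem{comb1}. With $F(\rho_f,\rho_{f'})^2 \le |Y| \tr\rho_f\rho_{f'} = \frac{|Y|}{|X|^2}\sum_{y,y'}|\level{f}{y}\cap\level{f'}{y'}|^2$, the ``bad'' contribution can genuinely be as large as $\beta|Y|\delta^2$ (each bad level set of $f$ sitting inside a single level set of $f'$), giving $\beta\delta^2|Y|^2/|X|^2$; since $\delta|Y| \ge \sum_y |\level{f}{y}| = |X|$, this is always at least the claimed $\beta\delta|Y|/|X|$, so no rearrangement of the bookkeeping along that route can recover the stated inequality (it could at best suffice for the application in \cor{comb2}). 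The missing idea is a \emph{different} fidelity bound: take the two-outcome measurement $\{\Pi_{\rho_f},\,1-\Pi_{\rho_f}\}$, where $\Pi_{\rho_f}=\sum_y |\level{f}{y}\>\<\level{f}{y}|$ is the projector onto the support of $\rho_f$ (the level-set states of $f$ are orthonormal because the level sets are disjoint). Monotonicity of fidelity under this measurement gives $F(\rho_f,\rho_{f'}) \le \sqrt{\tr\Pi_{\rho_f}\rho_{f'}}$, and hence
\begin{equation*}
F(\rho_f,\rho_{f'})^2 \;\le\; \frac{1}{|X|}\sum_{y,y'\in Y}\frac{|\level{f}{y}\cap\level{f'}{y'}|^2}{|\level{f}{y}|}\,,
\end{equation*}
which is exactly the weighted sum you were unable to manufacture: the $1/|\level{f}{y}|$ weight is how $\gamma$ enters, and there is no global rank factor $|Y|$ left to absorb.

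From there the split is over $y$, not over pairs: let $Y_{\text{bad}}$ be the set of $y$ for which some $y'$ gives $|\level{f}{y}\cap\level{f'}{y'}|\ge\alpha$ (controlled by the hypothesis via $\beta$). For good $y$ the terms are bounded using $|\level{f}{y}\cap\level{f'}{y'}|<\alpha$ and $|\level{f}{y}|\ge\gamma$, yielding the $\alpha|Y|^2/(\gamma|X|)$ term. For bad $y$ you use the identity $\sum_{y'}|\level{f}{y}\cap\level{f'}{y'}| = |\level{f}{y}|$ (the level sets of $f'$ partition $X$), so that the whole inner sum over $y'$ is at most $\bigl(\sum_{y'}|\level{f}{y}\cap\level{f'}{y'}|\bigr)^2/|\level{f}{y}| = |\level{f}{y}| \le \delta$; multiplying by the number of bad $y$ and dividing by $|X|$ gives the $\beta\delta|Y|/|X|$ term. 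Your instinct that ``$\beta$ must enter through the count of bad $y$'s'' was right, but without the projector bound replacing the rank bound, and without using the partition identity to collapse the sum over $y'$ for each bad $y$, the two terms of the claimed inequality cannot both be reached.
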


\begin{proof}
Let $\Pi_\rho$ denote the projector onto the support of $\rho$. By considering the POVM with elements $\Pi_\rho,1-\Pi_\rho$ and noting that the classical fidelity of the resulting distribution is an upper bound on the quantum fidelity, we have $F(\rho,\rho')\le \sqrt{\tr{\Pi_\rho \rho'}}$. Thus
\ba
F(\rho_f,\rho_{f'})^2
&\le \frac{1}{|X|}
\sum_{y,y' \in Y} \frac{|\level{f}{y} \cap \level{f'}{y'}|^2}
{|\level{f}{y}|} \\
&\le \frac{\alpha |Y|^2}{\gamma |X|}
+ \frac{1}{|X|} \sum_{y \in Y_{\text{bad}}}
\frac{\left(\sum_{y' \in Y} |\level{f}{y} \cap \level{f'}{y'}|\right)^2}
{|\level{f}{y}|} \\
&\le \frac{\alpha |Y|^2}{\gamma |X|}
+ \frac{1}{|X|} \sum_{y \in Y_{\text{bad}}} |\level{f}{y}|
\ea
where $Y_{\text{bad}}:= \{y \in Y: |\level{f}{y} \cap \level{f'}{y'}| \ge \alpha \text{~for some~} y' \in Y\}$.
Then the claim follows from the assumptions.
\end{proof}

\subsection{Distinguishability of hidden polynomial states}
\label{sec:hppspecial}

Now we specialize to the hidden polynomial problem.
Let $h(x) \in \F{q}[x_1,\ldots,x_d]$ be a polynomial in $d$ variables over $\F{q}$ of total degree $\deg h = O(1)$.
This polynomial is hidden by a function $f:X \to Y$ where $X=\F{q}^d$ and $|Y| \ge q$, which is simply $h$ composed with an arbitrary injective function from $\F{q}$ to $Y$. In particular, the level sets of $f$ are isomorphic to the level sets of $h$.
It is important that the black box hiding function $f$ is not simply
the hidden polynomial $h$, so that the problem of reconstructing $h$
from queries to $f$ will be hard for a classical computer. However,
$\rho_f = \rho_h$ by the isomorphism of the level sets, so it is
sufficient to calculate the fidelity between the states as if the
hiding functions were in fact the polynomials.

We begin by specializing Lemmas~\ref{lem:comb1} and \ref{lem:comb2} to the case of hidden polynomial states.
Here and in what follows, the number of variables and the degrees of polynomials
are considered bounded; the notation $o(1)$ is with respect to the limit $q \to
\infty$.

\begin{corollary}\label{cor:comb1}
Let $d \ge 2$, and suppose $\Pr_{y,y' \in \F{q}}[h(x)-y \text{~and~} h'(x)-y' \text{~have a common factor}] = o(q^{-1})$. Then $F(\rho_h,\rho_{h'}) = o(1)$.
\end{corollary}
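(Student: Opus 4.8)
The plan is to apply \lem{comb1} with $f=h$, $f'=h'$, $X=\F{q}^d$, and $Y=\F{q}$; this is legitimate since $\rho_h=\rho_f$ and the level sets of $f$ are isomorphic to those of $h$, so only the $q$ (possibly) nonempty level sets of $f$ matter. With $|X|=q^d$ and $|Y|=q$ the conclusion of \lem{comb1} reads $F(\rho_h,\rho_{h'})^2 \le (\alpha^2+\beta\delta^2)\,q^{3-2d}$, so it suffices to exhibit admissible parameters $\delta=O(q^{d-1})$, $\alpha=O(q^{d-2})$, and $\beta=o(q^{-1})$: then $\alpha^2 q^{3-2d}=O(q^{-1})$ and $\beta\delta^2 q^{3-2d}=o(q^{-1})\cdot O(q)=o(1)$, whence $F(\rho_h,\rho_{h'})=o(1)$. (This is exactly where $d\ge 2$ is used; for $d=1$ the factor $q^{3-2d}=q$ blows up.)

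For $\delta$: each level set $\level{h}{y}=\{x\in\F{q}^d : h(x)-y=0\}$ is a hypersurface cut out by a polynomial of degree at most $\deg h=O(1)$, so the Schwartz--Zippel bound gives $|\level{h}{y}|\le (\deg h)\,q^{d-1}$, and we take $\delta:=(\deg h)\,q^{d-1}$. The crux is the pairwise intersection $\level{h}{y}\cap\level{h'}{y'}=\{x : h(x)=y,\ h'(x)=y'\}$ in the case that $h(x)-y$ and $h'(x)-y'$ have \emph{no common factor} in $\F{q}[x_1,\dots,x_d]$; the claim is that this forces $|\level{h}{y}\cap\level{h'}{y'}|=O(q^{d-2})$ with the constant depending only on $d$ and the degrees. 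To see this, first make a generic invertible linear change of coordinates, which is a ring automorphism of $\F{q}[x_1,\dots,x_d]$ (hence preserves coprimality) and a bijection of $\F{q}^d$ (hence preserves the intersection size), and which, since $h,h'$ are nonconstant — a hypothesis forced on us, since otherwise the "common factor" probability would be $\ge 1/q$, not $o(q^{-1})$ — puts both $h(x)-y$ and $h'(x)-y'$ in $x_d$-regular position: positive degree in $x_d$ with invertible (constant) leading coefficient. Coprimality in the UFD $\F{q}[x_1,\dots,x_d]$ then passes, via Gauss's lemma, to coprimality in $\F{q}(x_1,\dots,x_{d-1})[x_d]$, so the resultant $R(x_1,\dots,x_{d-1}):=\mathrm{Res}_{x_d}(h-y,\,h'-y')$ is a nonzero polynomial of degree $O(1)$. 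Every point of the intersection projects to a zero of $R$ in $\F{q}^{d-1}$ (by the specialization property of resultants, valid because the $x_d$-degrees do not drop under specialization), and over each such projection there are at most $\deg h$ preimages; hence the intersection has at most $(\deg R)(\deg h)\,q^{d-2}=O(q^{d-2})$ points. Set $\alpha$ equal to this bound. Then the hypothesis $\Pr_{y,y'}[\,h-y\text{ and }h'-y'\text{ have a common factor}\,]=o(q^{-1})$ is precisely the statement $\Pr_{y,y'}[\,|\level{h}{y}\cap\level{h'}{y'}|\ge\alpha\,]\le\beta$ with $\beta=o(q^{-1})$, and feeding $\alpha,\beta,\delta$ into \lem{comb1} and simplifying as above completes the argument.

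The main obstacle is the intersection estimate of the previous paragraph: passing from the purely algebraic "no common factor" to a \emph{uniform} point count $O(q^{d-2})$. Beyond the resultant bookkeeping, the delicate points are (i) checking that a single generic linear change of coordinates simultaneously puts $h-y$ and $h'-y'$ into $x_d$-regular position while preserving coprimality (each failure locus is a proper subvariety of $\mathrm{GL}_d$, so generically all the conditions hold), and (ii) the degenerate cases where some specialization $h(a_1,\dots,a_{d-1},x_d)-y$ could drop degree — ruled out here by the invertible constant leading coefficient. A slicker but heavier alternative is to invoke the Lang--Weil estimates: no common factor implies $V(h-y)\cap V(h'-y')\subseteq\mathbb{A}^d$ has dimension at most $d-2$, and a variety of bounded degree and dimension at most $d-2$ has $O(q^{d-2})$ points over $\F{q}$.
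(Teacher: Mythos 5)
Your proposal is correct, and its skeleton is the same as the paper's: apply \lem{comb1} with $X=\F{q}^d$, $Y=\F{q}$ (legitimate since $\rho_f=\rho_h$), and the parameter choices $\delta=O(q^{d-1})$ (Schwartz--Zippel), $\alpha=O(q^{d-2})$, $\beta=o(q^{-1})$, giving $F(\rho_h,\rho_{h'})^2\le O(q^{-1})+o(1)$ exactly as in the paper. The one genuine difference is how the key intersection bound is obtained: the paper simply cites Lemma 4.3.3 of Schmidt \cite{Sch04}, which states that if $h-y$ and $h'-y'$ have no common factor then $|\level{h}{y}\cap\level{h'}{y'}|\le q^{d-2}\deg h\deg h'\min\{\deg h,\deg h'\}$, whereas you re-derive a bound of this form from scratch via resultants (generic $x_d$-regularization, Gauss's lemma to transfer coprimality to $\F{q}(x_1,\dots,x_{d-1})[x_d]$, nonvanishing of $\mathrm{Res}_{x_d}$, specialization justified by the constant leading coefficients, then Schwartz--Zippel in $d-1$ variables). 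Your derivation is sound, with the caveats you already flag: the ``generic'' direction exists over $\F{q}$ because $q\to\infty$ while the degrees stay bounded, and the trivial mismatch between ``$\ge\alpha$'' and ``$\le$ the bound'' is absorbed into the $O$. One small inaccuracy: the hypothesis does not quite force $h,h'$ nonconstant --- if both are constant the common-factor probability is $1/q^2=o(q^{-1})$ yet $F(\rho_h,\rho_{h'})=1$ --- so, as in the paper, nonconstancy should be read as an implicit standing assumption of the hidden polynomial setting rather than a consequence of the hypothesis. What your route buys is self-containedness and explicit constants; what the citation buys is brevity and a marginally sharper bound. The Lang--Weil alternative you mention would also suffice but, as you say, is heavier machinery than the statement needs.
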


\begin{proof}
By Lemma 4.3.3 of \cite{Sch04}, provided $h$ and $h'$ do not share a common factor,
$
|\level{h}{0} \cap \level{h'}{0}|
\le q^{d-2} \deg h \deg h' \min\{\deg h,\deg h'\}
$;
thus we can take $\alpha=O(q^{d-2})$ with $\beta=o(q^{-1})$.
By the Schwartz-Zippel Lemma (Lemma 3.3.1 in \cite{Sch04}),
$
|\level{h}{0}| \le q^{d-1} \deg h
$;
thus we can take $\delta=O(q^{d-1})$.
Then the result follows from \lem{comb1}.
\end{proof}

\begin{corollary}\label{cor:comb2}
Let $d \ge 2$, and suppose $\Pr_{y \in \F{q}}[h(x)-y \text{~not absolutely irreducible}] = o(1)$.
Then for all $h'$ with $\deg h' \le \deg h$ (other than multiples of $h$), 
$F(\rho_h,\rho_{h'}) = o(1)$.
\end{corollary}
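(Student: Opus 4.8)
The plan is to estimate the fidelity directly, following the opening of the proof of \lem{comb2} rather than invoking that lemma as a black box: \lem{comb2} assumes a uniform lower bound $\gamma\le|\level{h}{y}|$ over all level values, and no such bound holds here, since for $y$ with $h(x)-y$ irreducible over $\F{q}$ but reducible over $\overline{\F{q}}$ the zero set $\level{h}{y}$ can have as few as $O(q^{d-2})$ points. Working with $\rho_h$ directly (recall $\rho_f=\rho_h$, with $|X|=q^d$ and $q$ possible level values), the same computation that opens the proof of \lem{comb2} --- namely $F(\rho_h,\rho_{h'})\le\sqrt{\tr\Pi_{\rho_h}\rho_{h'}}$ together with orthonormality of the $|\level{h}{y}\>$ --- already gives
\be
F(\rho_h,\rho_{h'})^2 \;\le\; \frac{1}{q^d}\sum_{y,y'\in\F{q}}\frac{|\level{h}{y}\cap\level{h'}{y'}|^2}{|\level{h}{y}|}\,.
\ee
I would split this double sum according to whether $h(x)-y$ is absolutely irreducible, writing $B:=\{y\in\F{q}: h(x)-y \text{ not absolutely irreducible}\}$ (so $|B|=o(q)$ by hypothesis), and bound each part by $o(1)$.

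For $y\in B$ I would use only the trivial inequality $|\level{h}{y}\cap\level{h'}{y'}|^2/|\level{h}{y}|\le|\level{h}{y}\cap\level{h'}{y'}|$ and the fact that the level sets of $h'$ partition $\F{q}^d$, so that $\sum_{y'}|\level{h}{y}\cap\level{h'}{y'}|=|\level{h}{y}|\le q^{d-1}\deg h$ by Schwartz--Zippel (Lemma~3.3.1 of \cite{Sch04}). Hence the $B$-part is at most $q^{-d}\sum_{y\in B}q^{d-1}\deg h=O(|B|/q)=o(1)$ since $\deg h=O(1)$.

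For $y\notin B$ the hypersurface $h(x)-y$ is absolutely irreducible of dimension $d-1$, so by the Lang--Weil estimate (Weil's bound for curves when $d=2$) one has $|\level{h}{y}|=q^{d-1}(1+O(q^{-1/2}))\ge\tfrac12 q^{d-1}$ for $q$ large. Since $h(x)-y$ is irreducible, $\gcd(h(x)-y,h'(x)-y')$ is either $1$ or $h(x)-y$; the second case would force $h'(x)-y'=c(h(x)-y)$ for a constant $c$ (as $\deg(h'-y')=\deg h'\le\deg h$), i.e.\ $h'-ch$ constant, which is exactly what the hypothesis ``$h'$ not a multiple of $h$'' rules out. (That hypothesis must be read in the strong sense that $h'-ch$ is non-constant for every constant $c$, equivalently $h'$ does not induce the same level-set partition of $\F{q}^d$ as $h$; otherwise $\rho_{h'}=\rho_h$ and the conclusion plainly fails.) Thus $h(x)-y$ and $h'(x)-y'$ share no common factor, and Lemma~4.3.3 of \cite{Sch04} --- exactly as used in \cor{comb1} --- gives $|\level{h}{y}\cap\level{h'}{y'}|\le q^{d-2}\deg h\,\deg h'\min\{\deg h,\deg h'\}=O(q^{d-2})$ for every $y'$. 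Consequently, for each such $y$,
\be
\sum_{y'\in\F{q}}\frac{|\level{h}{y}\cap\level{h'}{y'}|^2}{|\level{h}{y}|}\;\le\;\frac{\max_{y'}|\level{h}{y}\cap\level{h'}{y'}|}{|\level{h}{y}|}\sum_{y'}|\level{h}{y}\cap\level{h'}{y'}|\;=\;O(q^{d-2})\,,
\ee
and summing over the $\le q$ good values of $y$ and dividing by $q^d$ bounds the $y\notin B$ part by $O(q^{-1})=o(1)$. Adding the two estimates yields $F(\rho_h,\rho_{h'})^2=o(1)$, hence $F(\rho_h,\rho_{h'})=o(1)$.

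The substantive inputs are two standard facts --- the Lang--Weil lower bound $|\level{h}{y}|=\Theta(q^{d-1})$ for absolutely irreducible $h(x)-y$, and Schmidt's intersection estimate (already invoked in \cor{comb1}) --- together with the elementary gcd argument that converts ``no absolute factorization coincidence'' into ``no common factor.'' I expect the only delicate point to be the bookkeeping that quarantines the $o(q)$ bad level values: these are precisely what obstruct a direct appeal to \lem{comb2}, and the estimate for them must be carried out with no lower bound whatsoever on $|\level{h}{y}|$, using instead that their total weight in $\rho_h$ is already $o(1)$.
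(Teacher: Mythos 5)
Your proof is correct, and it is in substance the paper's own argument --- the same support-projector fidelity bound, the same two inputs from Schmidt (Lemma 4.3.3 for the intersection of coprime hypersurfaces, Schwartz--Zippel/Lang--Weil for level-set sizes), and the same good/bad split over $y$ --- except that you inline the proof of \lem{comb2} instead of citing it. That refinement is justified: the paper's one-line invocation takes $\gamma = \Omega(q^{d-1})$, but the corollary's hypothesis only guarantees $|\level{h}{y}| = q^{d-1}[1+O(q^{-1/2})]$ for the $1-o(1)$ fraction of $y$ with $h(x)-y$ absolutely irreducible; for the $o(q)$ bad values the level set can be as small as $O(q^{d-2})$ (or smaller), so the hypothesis ``$\gamma \le |\level{f}{y}|$ for all $y$'' of \lem{comb2} is not literally met. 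Your split by absolute irreducibility, with the trivial bound $|\level{h}{y}\cap\level{h'}{y'}|^2/|\level{h}{y}| \le |\level{h}{y}\cap\level{h'}{y'}|$ on the bad part, is exactly what the proof of \lem{comb2} does for its set $Y_{\text{bad}}$, so nothing new is needed --- you have just verified that the lemma's mechanism survives when the lower bound is only available off the bad set. Two small remarks: on the good part the Lang--Weil lower bound you quote is never used, since $|\level{h}{y}|$ cancels against $\sum_{y'}|\level{h}{y}\cap\level{h'}{y'}| = |\level{h}{y}|$, so your argument is even a bit leaner than advertised; and your explicit gcd/degree argument (irreducible $h-y$ dividing $h'-y'$ with $\deg h'\le\deg h$ forces $h'-ch$ constant) is the content behind the paper's terse ``$h$ cannot share a common factor with $h'$,'' with your strong reading of ``multiples of $h$'' matching the intended hypothesis.
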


\begin{proof}
Since $h$ is irreducible, it cannot share a common factor with $h'$, so we can take $\alpha=O(q^{d-2})$ with $\beta=o(1)$.
By Lemma 5.5.1 of \cite{Sch04}, provided $h$ is absolutely irreducible, $|\level{h}{0}| = q^{d-1}[1 + O(q^{-1/2})]$, so we can take $\gamma=\Omega(q^{d-1})$ and $\delta=O(q^{d-1})$.
Then the result follows from \lem{comb2}.
\end{proof}

Finally, we show that almost all polynomials satisfy the conditions of \cor{comb2}, which implies that the query complexity of typical hidden polynomial problems is $\poly(\log q)$.

\begin{theorem}\label{thm:typical}
Fix $d \geq 2$ and $t \ge 1$.
Then for a fraction $1-o(1)$ of the polynomials $h$ in $\F{q}[x_1,\ldots,x_d]$ of total degree $t$, for all $h'$ with $\deg h' \leq t$ (other than multiples of $h$),  $F(\rho_h,\rho_{h'}) = o(1)$.
\end{theorem}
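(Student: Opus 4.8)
The plan is to obtain the theorem from Corollary~\ref{cor:comb2} by a short averaging argument over the choice of $h$. Recall that Corollary~\ref{cor:comb2} gives $F(\rho_h,\rho_{h'})=o(1)$ for all $h'$ with $\deg h'\le\deg h$ other than multiples of $h$, provided $P(h):=\Pr_{y\in\F{q}}[\,h(x)-y \text{ is not absolutely irreducible}\,]=o(1)$. So it suffices to show that $P(h)=o(1)$ for a $1-o(1)$ fraction of the degree-$t$ polynomials $h$. (I read ``of total degree $t$'' as $\deg h=t$; if it instead means $\deg h\le t$, then $\deg h=t$ for all but an $O(1/q)$ fraction of $h$, which is absorbed into the final $o(1)$, and then $\deg h=t$ in Corollary~\ref{cor:comb2}.)

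The key observation is that subtracting $y$ changes only the constant term of $h$, so if $h$ is uniform among degree-$t$ polynomials and $y\in\F{q}$ is independent and uniform, then $h-y$ is again uniform among degree-$t$ polynomials. Therefore
\be
  \mathbb{E}_h\bigl[P(h)\bigr]=\Pr_{h,y}\bigl[h-y \text{ not abs.\ irred.}\bigr]
  =\Pr_{g}\bigl[g \text{ not abs.\ irred.}\bigr],
\ee
where $g$ is uniform among polynomials of degree $t$ in $d$ variables over $\F{q}$. For $d\ge2$ this last probability is $o(1)$ (indeed $O(1/q)$): the polynomials of bounded degree that are not absolutely irreducible lie in a proper subvariety of the affine space of coefficients --- for each split $t=i+(t-i)$ with $1\le i\le t-1$ the ones with a degree-$i$ factor form the image of a multiplication map whose domain has strictly smaller dimension than the codomain precisely because $d\ge2$, and the remaining (irreducible over $\F{q}$ but not over $\overline{\F{q}}$) ones form a further proper, Galois-stable, hence $\F{q}$-rational locus. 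I would cite the corresponding density estimate, e.g.\ from \cite{Sch04}; note that $d=1$ is genuinely excluded, since a uniform univariate polynomial of degree $t$ is reducible with probability bounded away from $0$.

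Given $\mathbb{E}_h[P(h)]=o(1)$, Markov's inequality with threshold $s:=\sqrt{\mathbb{E}_h[P(h)]}=o(1)$ gives $\Pr_h[P(h)\ge s]\le\mathbb{E}_h[P(h)]/s=s=o(1)$. The same density estimate also shows that $h$ itself is absolutely irreducible for a $1-o(1)$ fraction of $h$. Hence for a $1-o(1)$ fraction of the degree-$t$ polynomials $h$ the hypotheses needed by Corollary~\ref{cor:comb2} hold, and for each such $h$ we get $F(\rho_h,\rho_{h'})=o(1)$ for every $h'$ with $\deg h'\le\deg h=t$ other than a multiple of $h$, which is the claim.

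The only substantive ingredient is the classical density bound for non-absolutely-irreducible polynomials, and that is the one place the hypothesis $d\ge2$ enters; the averaging and Markov steps are routine. If one prefers a self-contained proof of the density bound, I would use that $h-y$ is absolutely irreducible whenever the homogenization of $h-y$ cuts out a smooth hypersurface in $\mathbb{P}^d$ (a smooth hypersurface in $\mathbb{P}^n$, $n\ge2$, is geometrically irreducible): for a $1-O(1/q)$ fraction of leading forms $h_t$ the projective hypersurface $\{h_t=0\}\subset\mathbb{P}^{d-1}$ is smooth, which both removes all singularities at infinity for every $y$ and makes the affine critical locus $\{\nabla h=0\}$ finite, so that only $O(1)$ values of $y$ leave $\{h=y\}$ affinely singular. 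The mild point to watch in this route is small or bad characteristic (e.g.\ $p\mid t$), which is why it is cleanest to take $q$ large compared with the fixed parameters $d,t$; this does not affect the averaging argument, since the classical density statement holds in every characteristic.
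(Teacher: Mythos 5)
Your proposal matches the paper's proof in structure: both reduce the theorem to the statement that all but an $O(1/q)$ fraction of degree-$t$ polynomials in $d\ge 2$ variables are absolutely irreducible, then use the fact that $h-y$ is again uniform, Markov's inequality, and \cor{comb2}. The one difference is that the paper proves this density bound in full — via an explicit count of the $\F{q}$-degrees of freedom of factorizations, including the Frobenius-orbit analysis for polynomials irreducible over $\F{q}$ but not over $\overline{\F{q}}$ — whereas you sketch the same dimension count and defer to a citation; since that bound is classical this is acceptable, but be aware that it is where essentially all of the work in the paper's proof resides.
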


\begin{proof}
We show that the fraction of polynomials that are not absolutely
irreducible is $O(1/q)$. Then the theorem follows by application of
\cor{comb2} and Markov's inequality.

The main idea is to count nontrivial factorizations of $h$.
Let $\F{q}(d;t)$ denote the set of $d$-variate polynomials over $\F{q}$ of total degree $t$.
If $t=1$ then we know the states are distinguishable (since they are abelian hidden subgroup states), so we can assume $t \ge 2$.
It is convenient to discuss $\F{q}$-projectivized polynomials, i.e., equivalence classes with respect to multiplication by nonzero elements in $\F{q}$; denote these by $\PF{q}(d;t)$.

The number of $\F{q}$-degrees of freedom of $\PF{q}(d;t)$ (i.e., the number of elements of $\F{q}$ required to specify a member of $\PF{q}(d;t)$)
is $\binom{d+t}{d}-1$.
The number of $\F{q}$-degrees of freedom of $\PF{q^k}(d;t)$ (the set of $\F{q^k}$-projectivized polynomials with coefficients in $\F{q^k}$) is $k\big(\binom{d+t}{d}-1\big)$.
Now we rely on the following fact: Let $h \in \PF{q}(d;t)$. 
Then there is a (unique) factorization $h=h_1 \cdots h_\ell$ (for some $\ell \geq 1$) with each $h_i \in \PF{q}(d;t)$, and of the following special form: In the (unique) factorization $h_i = \eta_{i,1} \cdots \eta_{i, k_i}$ of $h_i$ over the algebraic closure of $\F{q}$, for every $j$, the smallest field containing the coefficients of $\eta_{i,j}$ is $\F{q^{k_i}}$, and the Frobenius automorphism $c \mapsto c^q$, acting on coefficients, cyclically permutes the set $\{ \eta_{i,1}, \ldots, \eta_{i, k_i} \}$.
(In particular, for any fixed $i$, the $\eta_{i,j}$ are all distinct.)
Most importantly, $\eta_{i,1}$ determines all the $\eta_{i,j}$, so the number of $\F{q}$-degrees of freedom of $h_i$ (of degree $t_i$) is  $k_i \big(\binom{d+t_i/k_i}{d}-1\big)$.

There are at most $t$ possible values for $\ell$; we bound the number of factorizations by treating $\ell>1$ and $\ell=1$ separately.

The number of $\F{q}$-degrees of freedom for the factorizations of $h$ with $\ell>1$ is upper bounded by $\max_{1 \leq t' < t} \big[\binom{d+t'}{d} + \binom{d+t-t'}{d} -2\big]$.
It suffices to show that this is $\leq \binom{d+t}{d}-2$, hence strictly less than $\binom{d+t}{d}-1$, the number of $\F{q}$-degrees of freedom of $\PF{q}(d;t)$. Fix an ordered set of size $d+t$.
It has $\binom{d+t}{d}$ subsets of size $d$, $\binom{d+t'}{d}$ subsets of size $d$ which avoid the last $t-t'$ elements, and $\binom{d+t-t'}{d}$ subsets of size $d$ which avoid the first $t'$ elements. The latter two collections have just one common element, so we need only note that for $t \geq 2$, there is at least one subset of size $d$ which is in neither collection.

The number of $\F{q}$-degrees of freedom for the factorizations with $\ell=1$ is, by the earlier discussion, $\max_{k>1} \big[k\big(\binom{d+t/k}{d}-1\big)\big]$. We again need to show that this is $\leq \binom{d+t}{d}-2$. Fix a set of size $d+t$, and partition it into $B_0$ of size $d$ and $B_1,\ldots,B_k$ each of size $t/k$. For any $1 \leq i \leq k$, the quantity $\binom{d+t/k}{d}-1$ counts the subsets of size $d$ which are contained in $B_0 \cup B_i$ but which are not equal to $B_0$. These are disjoint subsets. None of them includes $B_0$; and because $t \geq 2$, they also miss at least one other subset of size $d$, which intersects more than one of $B_1,\ldots,B_k$. Hence the desired inequality follows.
\end{proof}

\section*{Acknowledgments}

We thank Dan Abramovich for his insights on the typicality of absolute irreducibility, and John Watrous for suggesting the problem of finding an efficient quantum procedure for moving amplitude from spheres to their centers. 
AMC received support from NSF Grant PHY-0456720 and ARO Grant W911NF-05-1-0294.
LJS received support from NSF Grant PHY-0456720, ARO Grant
W911NF-05-1-0294, and NSF Grant CCF-0524828.
UVV received support from NSF Grant CCF-0524837 and
ARO Grant W911NF-07-1-0030.


\begin{bibdiv}
\begin{biblist}

\bib{Ado89}{article}{
      author={Adolphson, A.},
       title={On the distribution of angles of {K}loosterman sums},
        date={1989},
     journal={J. Reine Angew. Math.},
      volume={395},
       pages={214\ndash 220},
}

\bib{Asc00}{book}{
      author={Aschbacher, M.},
       title={Finite Group Theory},
     edition={2},
   publisher={Cambridge University Press},
        date={2000},
}

\bib{BK02}{article}{
author={Barnum, H.},
author={Knill, E.},
title={Reversing quantum dynamics with near-optimal quantum and
classical fidelity},
date={2002},
journal={J. Math. Phys.},
volume={43},
number={5},
pages={2097\ndash 2106},
}

\bib{Car53}{article}{
      author={Carlitz, L.},
       title={Weighted quadratic residues over a finite field},
        date={1953},
     journal={Can. J. Math.},
      volume={5},
       pages={317\ndash 323},
}

\bib{EHK99}{techreport}{
author={Ettinger, M.},
author={H{\o}yer, P.},
author={Knill, E.},
title={Hidden subgroup states are almost orthogonal},
eprint={quant-ph/9901034},
}

\bib{EH00}{article}{
author={Ettinger, M.},
author={H{\o}yer, P.},
title={On quantum algorithms for noncommutative hidden subgroups},
date={2000},
journal={Adv. in Appl. Math.},
volume={25},
pages={239\ndash 251},
}

\bib{GSVV01}{inproceedings}{
      author={Grigni, M.},
      author={Schulman, L.},
      author={Vazirani, M.},
      author={Vazirani, U.},
       title={Quantum mechanical algorithms for the nonabelian hidden subgroup
  problem},
        date={2001},
   booktitle={Proc. 33rd STOC},
       pages={68\ndash 74},
}

\bib{HMRRS06}{inproceedings}{
      author={Hallgren, S.},
      author={Moore, C.},
      author={R{\"o}tteler, M.},
      author={Russell, A.},
      author={Sen, P.},
       title={Limitations of quantum coset states for graph isomorphism},
        date={2006},
   booktitle={Proc. 38th STOC},
       pages={604\ndash 617},
}

\bib{HW06}{techreport}{
author={Harrow, A.~W.},
author={Winter, A.},
title={How many copies are needed for state discrimination?},
eprint={quant-ph/0606131},
}

\bib{Kat88}{book}{
      author={Katz, N.~M.},
       title={Gauss sums, {K}loosterman sums, and Monodromy Groups},
      series={Annals of Mathematics Studies},
   publisher={Princeton University Press},
        date={1988},
      volume={116},
        note={Chapter 13},
}

\bib{Kit97}{article}{
author={Kitaev, A.~{Yu.}},
title={Quantum computations: Algorithms and error correction},
date={1997},
journal={Russian Math. Surveys},
volume={52},
number={6},
pages={1191\ndash 1249},
}

\bib{MMST96}{article}{
author={Medrano, A.},
author={Myers, P.},
author={Stark, H.~M.},
author={Terras, A.},
title={Finite analogues of {E}uclidean space},
date={1996},
journal={J. Comput. Appl. Math.},
volume={68},
pages={221\ndash 238},
}

\bib{MRS05}{inproceedings}{
      author={Moore, C.},
      author={Russell, A.},
      author={Schulman, L.~J.},
       title={The symmetric group defies strong {F}ourier sampling},
        date={2005},
   booktitle={Proc. 46th FOCS},
       pages={479\ndash 488},
}

\bib{Nie91}{article}{
      author={Niederreiter, H.},
       title={The distribution of values of {K}loosterman sums},
        date={1991},
     journal={Arch. Math.},
      volume={56},
       pages={270\ndash 277},
}

\bib{Sal32}{article}{
      author={Sali{\'e}, H.},
       title={{\"U}ber die {K}loostermanschen {S}ummen {$S(u,v;q)$}},
        date={1932},
     journal={Math. Z.},
      volume={34},
      number={1},
       pages={91\ndash 109},
}

\bib{Sch04}{book}{
author={Schmidt, W.~M.},
title={Equations Over Finite Fields: An Elementary Approach},
edition={2},
publisher={Kendrick Press},
date={2004},
}

\bib{Sho97}{article}{
author={Shor, P.~W.},
title={Algorithms for quantum computation: {D}iscrete logarithms and
factoring},
date={1997},
journal={SIAM J. Comput.},
volume={26},
number={5},
pages={1484\ndash 1509},
}

\bib{Sim94}{inproceedings}{
author={Simon, D.},
title={On the power of quantum computation},
date={1994},
booktitle={Proc. 35th FOCS},
pages={116\ndash 123},
}

\bib{Wei48}{article}{
      author={Weil, A.},
       title={On some exponential sums},
        date={1948},
     journal={Proc. Natl. Acad. Sci.},
      volume={34},
       pages={204\ndash 207},
}

\end{biblist}
\end{bibdiv}

\clearpage
\appendix
\section{General formulation of shifted subset problems}
\label{app:ssp}

In \sec{hrp}, we explained one way to formulate the hidden radius problem as a black box problem.  In this appendix, we give an alternative definition that applies to general shifted subset problems.  (It is also possible to give a general definition along the lines of \sec{hrp}, but such a definition requires certain intersection properties not required here.)

An instance of a shifted subset problem over $X:=\F{q}^d$ is specified by a subset of points $S$ and a set of shifts $T$.  The problem is to determine some property of $S$ or $T$ (or both) using a black box that hides the shifted subsets $S+t$ for $t \in T$.

To obfuscate the meanings of the shifts, we introduce a bijection $\tau:T \to T$.  Furthermore, to obfuscate the meanings of the points in the subsets, we introduce a bijection $\sigma_t:S \to S$ for each $t \in T$.  The
black-box function $\pi:S \times T \to X$ defined as
$
  \pi(s,t):=\tau(t)+\sigma_t(s)
$
turns an input $(s,t)$, representing an encryption of a point in the space associated with a particular shifted subset, into an explicit point $x \in X$.  We associate each encrypted shift $t \in T$ with a black-box function value $f(t)$, where $f:T \to Y$ is an injection into an arbitrary finite set $Y$.  Finally, to allow erasing the encrypted inputs $(s,t)$, we introduce the function $g:X \times Y \to (S \times T) \cup \{\varnothing\}$ defined as
\be
  g(x,y):=
  \begin{cases}
    (s,t) & \exists s\in S,t \in T: \pi(s,t)=x \text{~and~} f(t)=y \\
    \varnothing & \text{otherwise} \,.
  \end{cases}
\ee
The oracle allows us to compute $\pi$, $f$, or $g$ as desired.

Just as in \thm{classically_hard}, we have
\begin{theorem}
Any classical computation with access to $\pi$, $f$, and $g$ requires an expected exponential number of queries to obtain a $1/\poly(d \log q)$ bias for any single bit of information about $S$ or $T$.
\end{theorem}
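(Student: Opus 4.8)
The plan is to follow the adversary argument sketched for \thm{classically_hard}, now against the three-oracle model. Fix any single bit $\beta$ of the instance and a prior distribution over instances $(S,T)$ under which $\beta$ is (nearly) balanced; conditioned on $(S,T)$, draw the obfuscation data $\tau$, $\{\sigma_t\}_{t\in T}$ and $f$ independently and uniformly at random (with $g$ the induced function), and let $Y$ be any finite set of size $q^{2d}$, say. By Yao's principle it suffices to treat deterministic algorithms, and by Markov's inequality it then suffices to show that for a sufficiently small constant $c>0$, no deterministic algorithm making at most $q^{cd}$ queries achieves a $1/\poly(d\log q)$ bias for $\beta$.

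The core claim, exactly parallel to \thm{classically_hard}, is that the transcript has a \emph{fixed, instance-independent distribution as long as no two queries reference a common shifted subset} $S+\tau(t)$ --- where a query references $S+\tau(t)$ if it is a $\pi$- or $f$-query with that encrypted shift $t$, or a $g$-query returning a pair with that shift. Indeed, for a query referencing a shifted subset not referenced before, the answer is ``fresh'': $\pi(s,t)=\tau(t)+\sigma_t(s)$ with $\tau(t)$ uniform and $\sigma_t$ a uniformly random bijection of $S$, so the answer depends on $(S,T)$ only through the ambient set $X=\F{q}^d$ and the previously revealed, instance-independent constraints on $\tau$; $f(t)$ is a fresh uniform label; and $g(x,y)$ equals $\varnothing$ unless $y$ is one of the at most $m$ already-revealed labels, the exceptional probability being $O(|T|/|Y|)$, negligible by our choice of $Y$. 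Since the obfuscations $\tau,\sigma_t,f$ and the largeness of $Y$ leave the algorithm with no way to \emph{request} a point of, or a label of, a shifted subset it has not yet encountered, referencing a shifted subset twice amounts to a coincidence whose probability over $m$ queries is $O(m^2/q^d)$ by a birthday estimate (the guessing space being $\F{q}^d$), exactly as in \thm{classically_hard}. Hence the attainable bias is $O(m^2/q^d)$.

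With $m\le q^{cd}$ and $c<\tfrac12$ this bias is $q^{-\Omega(d)}=o(1/\poly(d\log q))$, which proves the theorem. The step I expect to be the real obstacle is making the notion ``references a common shifted subset'' precise for the three-oracle model and verifying the two assertions that hang on it: that, conditioned on its non-occurrence, the transcript is genuinely instance-independent (handling the mild, $\beta$-irrelevant correlations from $\tau$ and $f$ being bijections rather than random functions, and from the $\varnothing$-answers of $g$), and that the obfuscations really do prevent an algorithm from deliberately engineering such a coincidence --- after which the birthday bound, and hence the conclusion, are routine. For the problems actually studied (hidden radius, hidden flat of centers) one may in addition invoke the structure of spheres --- Witt's lemma, as in the proof of \lem{walk}, and the Salié/Kloosterman estimates of \sec{hrp} --- to see that the few relations a coincidence could reveal are themselves nearly instance-independent, though this refinement is not needed for the stated bias.
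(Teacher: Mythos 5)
Your proposal is correct and follows essentially the same route as the paper, which for this theorem says only that ``the proof proceeds along the same lines as before,'' i.e.\ the adversary/birthday argument of \thm{classically_hard}: randomize the obfuscations, reduce to deterministic algorithms, observe that the transcript is instance-independent until some shifted subset is referenced twice, and bound that collision probability by $O(m^2/q^d)$. Your explicit treatment of the $g$-oracle (taking $|Y|$ large so that guessed labels almost never hit) and your flagging of the bookkeeping needed to define ``references a common shifted subset'' in the three-oracle model go somewhat beyond the paper's one-line sketch, but do not constitute a different approach.
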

\noindent
The proof proceeds along the same lines as before.

However, on a quantum computer, we can prepare quantum states that encode $S$ and $T$.  We begin with a uniform superposition over the encrypted inputs $(s,t)$, compute the point $x=\pi(s,t)$, compute $f(t)$, uncompute the original inputs, and finally discard the function value.  This procedure results in the state (up to normalization)
\ba
  \sum_{s \in S, t \in T} |s,t\>
  &\mapsto \sum_{s \in S,\, t \in T} |s,t,\pi(s,t)\> \\
  &\mapsto \sum_{s \in S,\, t \in T} |s,t,\pi(s,t),f(t)\> \\
  &\mapsto \sum_{s \in S,\, t \in T} |\pi(s,t),f(t)\> \\
  &\mapsto |S+t\> \text{~where $t$ is uniformly random in $T$}
\,.
\ea
In other words, we have prepared the shifted subset state
\be
  \rho_{S,T} := \frac{1}{|T|} \sum_{t \in T} |S+t\>\<S+t|
\label{eq:ssstate}
\,.
\ee

Note that we may allow the possible sets $S$ to have different sizes, and similarly for the possible sets $T$.  For example, we see from \eq{spheresizes} that spheres of nonzero radius have two different sizes in odd dimensions.  In such cases the black box functions can be expanded to include a symbol $\varnothing$ that is returned if the input is invalid.  The above procedure can still be used provided the probability that the measurement returns the outcome $\varnothing$ is small.

\section{Query complexity of the HRP in odd dimensions}
\label{app:odd}

\begin{proof}[Proof of \thm{odd}]
The distribution of $k$ is given by
\be
  \Pr(k|r)
  = \frac{1}{q |\sphere{r}|}
  \begin{cases}
  |\sphere{r}|^2/q^{d-1} & k=0 \\
  1 & r\dist{k}=0, r \ne 0 \text{~or~} \dist{k} \ne 0 \text{~with~} k \ne 0 \\
  4 \cos^2(2\pi \tr\sqrt{r \dist{k}}/p) & \quadchar(r \dist{k})=1 \\
  0 & \quadchar(r \dist{k})=-1 \text{~or~} r=\dist{k}=0 \text{~with~} k \ne 0\,.
  \end{cases}
\label{eq:oddprob}
\ee
Now consider a pair of distinct radii $r,r'$.
We have already described an efficient algorithm to determine $\quadchar(r)$ (and in particular, to decide whether $\quadchar(r)=0$), so we can assume $r,r' \ne 0$.
If $\quadchar(r) \ne \quadchar(r')$, then the distributions they induce have nearly disjoint support, and their total variation distance is $1-o(1)$.  Otherwise, we can rescale the spheres and the measured values of $\dist{k}$ so that we are effectively distinguishing radius $1$ from some arbitrary radius $r \ne 1$ with $\quadchar(r)=1$.  The minimum total variation distance between the resulting distributions is
\ba
&
\min_{\substack{r \in \F{q} \setminus \{1\} \\ \quadchar(r)=1}}
\frac{2}{q}
\sum_{\substack{s \in \F{q} \\ \quadchar(s)=1}}
\left|\cos^2\frac{2\pi\tr\sqrt{s}}{p}
-\cos^2\frac{2\pi\tr\sqrt{rs}}{p}\right| \nn
&\qquad=
\min_{r \in \F{q}^\times \setminus \{\pm 1\}}
\frac{2}{q}
\sum_{s \in \F{q}}
\left|\cos^2\frac{2\pi\tr s}{p}
-\cos^2\frac{2\pi\tr rs}{p}\right| \\
&\qquad=
\min_{r \in \F{q}^\times \setminus \{\pm 1\}}
\frac{2}{q}
\sum_{s \in \F{q}} \frac{1}{2}
\left|\cos\frac{4\pi\tr s}{p}
-\cos\frac{4\pi\tr rs}{p}\right| \\
&\qquad\ge
\min_{r \in \F{q}^\times \setminus \{\pm 1\}}
\frac{2}{q}
\sum_{s \in \F{q}} \frac{1}{4}
\left|\cos\frac{4\pi\tr s}{p}
-\cos\frac{4\pi\tr rs}{p}\right|^2 \\
&\qquad=
\min_{r \in \F{q}^\times \setminus \{\pm 1\}}
\frac{1}{q}
\sum_{s \in \F{q}} \left(
\cos^2 \frac{4\pi\tr s}{p}
- \cos \frac{4\pi\tr s}{p} \cos \frac{4\pi\tr rs}{p}
\right) \\
&\qquad=
\frac{1}{q}
\sum_{s \in \F{q}}
\cos^2 \frac{4\pi\tr s}{p} \\
&\qquad=
\frac{1}{2}
\,.
\ea
Since an arbitrary pair of radii are statistically distinguishable with constant total variation distance, $\poly(\log q)$ samples are information-theoretically sufficient to identify an arbitrary radius.
\end{proof}

Note that the distribution \eq{oddprob} in the case $\quadchar(r \dist{k})=1$ resembles the distribution induced by a well-known single-register measurement for the dihedral hidden subgroup problem \cite{EH00}, which has resisted attempts at efficient postprocessing.


\end{document}